\patchcmd\caption@subtypehook{\let\label\subcaption@label}
{\let\label\subcaption@label\let\ltx@label\subcaption@label}{}{\fail}
\theoremstyle{definition}
\newtheorem{definition}{Definition}
\newtheorem{proposition}{Proposition}
\newtheorem{corollary}{Corollary}
\newcommand{\errorchain}{e}
\newcommand{\errorclass}{\mathcal{C}}
\newcommand{\db}[1]{{{[\![ #1]\!]}}}
\newcommand{\edit}[1]{{#1}} 
\begin{document}

\title{A partition function framework for estimating logical error curves in stabilizer codes}
\author{Leon Wichette}
\email{leon.wichette@dfki.de}
\affiliation{%
    Robotics Innovation Center, German Research Center for Artificial Intelligence, 28359 Bremen,
    Germany
}
\author{Hans Hohenfeld}
\email{hans.hohenfeld@uni-bremen.de}
\affiliation{%
    Robotics Research Group, University of Bremen, 28359 Bremen, Germany
}
\author{Elie Mounzer}
\email{elie.mounzer@dfki.de}
\affiliation{%
    Robotics Innovation Center, German Research Center for Artificial Intelligence, 28359 Bremen,
    Germany
}
\author{Linnea Grans-Samuelsson}
\email{linnea.grans-samuelsson@physics.ox.ac.uk}
\affiliation{Rudolf Peierls Centre for Theoretical Physics, University of Oxford, Oxford OX1 3PU, United Kingdom}

\begin{abstract}
Based on the mapping between stabilizer quantum error correcting codes and disordered statistical
mechanics models, we define a ratio of partition functions that measures the success probability for
maximum partition function decoding, which at the Nishimori temperature corresponds to maximum
likelihood (ML) decoding. We show that this ratio differs from the similarly defined order
probability and describe the decoding strategy whose success rate is described by the order
probability. We refer to the latter as a probabilistic partition function decoding and show that it
is the strategy that at zero temperature corresponds to maximum probability (MP) decoding.
Based on the difference between the two decoders, we discuss the possibility of a maximum partition
function decodability boundary outside the order-disorder phase boundary. At zero temperature, the
difference between the two ratios measures to what degree MP decoding can be improved by accounting
for degeneracy among maximum probability errors, through methods such as ensembling.
We consider in detail the example of the toric code under bitflip noise, which maps to the Random
Bond Ising Model. We demonstrate that estimation of logical performance through decoding probability
and order probability is more sample efficient than estimation by counting failures of the
corresponding decoders\edit{, especially in the regime of low noise}.
We consider both uniform noise and noise where qubits are given individual error rates. The latter
noise model lifts the degeneracy among maximum probability errors, but we show that ensembling
remains useful as long as it also samples less probable errors. \edit{We also consider, in less detail, the color code under bitflip and depolarizing noise.}
\end{abstract}

\maketitle

\section{Introduction}

For large scale quantum computing, high-fidelity qubits and operations are needed. On hardware with
limited fidelities, this can be achieved through quantum error correction and fault-tolerant logical
operations on the encoded logical qubits. The logical performance can be improved by hardware
improvements that reduce the physical error rates, or by the usage of a better quantum error
correcting code or a more accurate decoder. In the choice of quantum error correcting code, logical
performance must be balanced against the spacetime overhead, while the decoder accuracy must be
balanced against the decoder runtime.

Just as there is currently a multitude of approaches to hardware design, the same holds for quantum
error correcting codes and decoding approaches, with the list likely to keep growing rapidly. Within
this abundance of options, estimates of the logical performance under hardware-realistic noise
models provide useful guidance. One important metric of code performance, the optimal threshold, has
for scalable families of stabilizer and subsystem codes early on been related to order-disorder
phase transitions in Random Bond Ising-type statistical mechanics models along a certain line in the
phase diagram: the \emph{Nishimori line}~\cite{dennis2002b}. This statistical mechanics mapping has
been extended from independent qubit Pauli noise to more realistic noise models, including circuit
noise and coherent noise~\cite{chubb2021c, rispler2024, piveteau2024, behrends2024d, behrends2025,
vodola2022, venn_coherent-error_2023}, and used to extract finite size corrections to the threshold for surface codes tailored
to biased noise~\cite{xiao2024a}. Similarly, the phase transition along the zero temperature line
has been shown to yield the threshold under maximum probability (MP)
decoding~\cite{dennis2002b,chubb2021c}. More specifically, maximum likelihood (ML) decoding relates
to the comparison of certain partition functions at the Nishimori temperature (as detailed in
Section~\ref{Background}), while maximum probability decoding relates to the comparison of partition
functions at zero temperature. In the toric code under independent identically distributed bitflip
noise, which maps to the $\pm J$ Random Bond Ising Model (RBIM), the phase boundary has been shown
to be reentrant~\cite{wang2003}, illustrating the non-optimality of MP decoding. More recently, the
RBIM phase boundary has been mapped out through numerical estimates of the order
probability~\cite{thomas2011b}.

While the order probability allows for the determination of the phase boundary, and hence the
threshold of an optimal decoder, we show that it does not measure the success rate of the optimal
decoder. Instead, it measures the success rate of what we shall refer to as a probabilistic
partition function decoder. The full optimal logical error curves are thus inaccessible through this
measure. Analogous to order probability, we introduce a ratio that we name the \emph{decoding
probability}, which on the Nishimori line measures the optimal logical error curves. We expect that
estimating optimal performance by measuring the decoding probability directly generally requires
less sampling overhead than estimation by counting the number of failures of the corresponding ML
decoder. Previous works concerning performance estimates via the statistical mechanics mapping have
mainly focused on establishing the optimal thresholds of error correcting codes under different
noise models (e.g.,~\cite{chubb2021c, rispler2024, bombin2012a, vodola2022, andrist2012a,song2022,
katzgraber2010, Kovalev:2014vhn}). 
In~\cite{bravyi2014b}, full optimal error curves were computed for the surface
code under independent identically distributed bitflip noise by counting the number of failures of
an ML decoder.
Tensor network methods have been used to approximate optimal decoding in a controllable
way~\cite{bravyi2014b,chubb2021c,piveteau2024}. Most recently,~\cite{piveteau2024} generalized
tensor network decoding to the case of $3$D codes, such as the $3$D unrotated surface code with
depolarizing noise, as well as $2$D codes with circuit level noise.
There has also been previous work on estimating optimal performance in the surface code through the
construction of a lookup table, in order to test the accuracy of different suboptimal surface code
decoders~\cite{maan2023a}, although the authors estimated that this method could at most be applied
to a distance $9$ surface code.
On the Nishimori line, measurement of optimal performance by decoding probability corresponds to the
measurement by syndrome-averaged minimum infidelity used in~\cite{behrends2025} (see also the
related work~\cite{behrends2024d} by the same authors) as well as the success rate shown
in~\cite{bravyi2014b}. In~\cite{thomas2011b}, the order probability in the Random Bond Ising Model
was linked to logical performance rates of the toric code, but the distinction between probabilistic
partition function decoding and optimal decoding was not made, and the resulting curves do not show
the success rate of an optimal decoder.  \edit{In more recent work, Ref.~\cite{english2025isingdonutregimestopological} focuses on the toric code under bitflip noise with full post-selection, which maps to a clean Ising model where analytic expressions can be derived.}

Generalizing beyond ML and MP decoders, one can define decoders that return a correction based on
either the maximization of the partition function at a given temperature, or that probabilistically
return a correction with the probabilities set by the partition functions. We refer to these as
maximum partition function decoders and probabilistic partition function decoders, respectively.
(For the former, previous works similarly define minimum free energy decoders~\cite{chubb2021c}.
However, in the zero temperature limit, we find that the partition function is the better quantity
to consider, since degeneracy can play an important role.) We show that the decoding probability
measures the success rates of the former, and the order probability measures the success rates of
the latter. This gives access to the logical error curves for any member of these two families of decoders. 
We relate probabilistic partition function decoding at zero temperature to MP decoding,
and maximum partition function decoding at zero temperature to ``degeneracy enhanced'' MP (dMP)
decoding -- meaning that, whenever there are multiple maximum probability errors, the decoder
returns a correction from the equivalence class containing the largest number of such errors. In
matching based decoders, degeneracy enhancement can be approximated through the method of
\emph{ensembling} described in~\cite{shutty2024}, where the edge weights of the decoding graph are
perturbed in order to sample over multiple matchings. Again, we expect that estimating the success
rates through order probability and decoding probability requires fewer samples than by counting the
number of failures of the corresponding decoders. Together, order probability and decoding
probability provide a flexible framework for estimating the performance under different decoding
schemes. Given a hardware realistic noise model and a set of potential codes that could run on the
hardware, such an estimation provides a way to filter out codes with low optimal performance before
spending resources on developing a fast decoder. In the development of a fast decoder tailored to a
specific code, it also provides a measure for how far from optimality the decoder is, and whether or
not there is enough room for improvement to motivate further refinements.

Other than being interesting as tools for performance estimation in stabilizer codes, we believe
that the distinction between decoding probability and order probability is of interest for further
characterizing the phase diagrams of the statistical mechanics models that the codes map to. The
shape of the RBIM phase boundary has historically attracted interest of its own, having originally
been expected to be vertical~\cite{nishimori1986} but having later been found to be
reentrant~\cite{wang2003, thomas2011b}. The distinction between order probability and decoding
probability opens the possibility that there could exist stabilizer codes whose associated
statistical mechanics models have maximum partition function decodability boundaries (defined from
the thresholds of maximum partition function decoders) that differ from their phase boundaries. We
define in Section~\ref{SuccessRates} conditions under which a model would possess a vertical
decodability boundary even with a reentrant phase boundary.

After describing the general framework for performance estimation via decoding probability and order
probability, we consider in detail the example of the toric code under both uniform (independent
identically distributed) and non-uniform bitflip noise, where in the latter noise model we assign
different error probabilities to each qubit. While still a simplified ``toy model'', non-uniform
qubit fidelities represent a scenario that better agrees with hardware observations~\cite{arute2019a}
and introduces a degeneracy lifting into the model where the distinction between
MP and dMP decoding vanishes, so that the zero temperature locations of the phase boundary and the
decodability boundary must agree. Meanwhile, we find that an improvement from ML over MP remains,
although it decreases for larger standard deviations. This implies that there is still room for
potential gains from decoder improvements through methods such as ensembling, although in the case
of ensembling, the perturbations must now be taken large enough to ensure sampling of not only the
most probable errors.

The structure of the paper is as follows.

In Section~\ref{Background} we summarize the relevant background. We describe ML, MP and dMP
decoding, and introduce the mapping between stabilizer codes and disordered statistical mechanics
models.

In Section~\ref{SuccessRates} we define maximum partition function decoders and probabilistic
partition function decoders, and show that their success rates are measured by the decoding
probability and the order probability, respectively. We discuss under which conditions a maximum
partition function decoder could reach optimality even away from the Nishimori line.

In Section~\ref{FKT_results_section} we present numerical results for the toric code under bitflip
noise, using the FKT algorithm to compute the relevant partition functions. We demonstrate that the
decoding probability and order probability give estimates of the success rates of ML, MP and dMP
decoders, \edit{and that significantly fewer samples are required to reach a given confidence interval
compared to estimation by counting the number of decoder failures. The relative required sample size depends on $p$ and $d$, and decreases in the low noise regime, making estimation through partition function ratios especially useful where regular sampling struggles.} 
Comparing the dMP and ML thresholds, we see that the maximum partition function decodability boundary is reentrant in the Random Bond Ising Model. 

In Section~\ref{BiasDegeneracyEnsemblingMatching} we focus on the zero temperature limit, and
compare to matching based decoders. We discuss the role of boundary conditions and the parity of the
code distance in determining the effect of degeneracy. We show how these factors influence the
amount of improvement observed from ensembling compared to regular MWPM in the toric code, unrotated
surface code and rotated surface code, using the \texttt{PyMatching}~\cite{higgott2025}
implementation of MWPM.\@ We also discuss bias, both in MWPM and in ensembling. \edit{In addition to regular bitflip noise, we briefly discuss the effects of degeneracy and bias for a noise model focusing on hook errors in the ``bare ancilla'' syndrome extraction circuit,
as well as for combined bitflip and readout noise.}

In Section~\ref{StrongEnsembling}, we turn to non-uniform bitflip noise in the toric code,
drawing qubit error rates from a uniform distribution with varying standard deviations. We estimate
MP, dMP and ML performance from the relevant partition functions, as well as the improvement from
ensembling over regular MWPM using \texttt{PyMatching}. We see that all decoding strategies perform
better under non-uniform noise than under uniform noise with the same mean error rate, and that MP
and dMP decoders now perform identically. Additionally, we see that the performance gains from
ensembling decrease as the standard deviation is increased.

\edit{Finally, in Section~\ref{ColorCode}, we briefly consider the color code under bitflip and depolarizing noise, focusing on the Nishimori line. In this setting, we use tensor network methods to estimate the partition functions. The results again indicate that the efficiency gain from estimating logical error rates via partition function ratios, instead of counting decoder failures, increases in the low noise regime, although limitations in reaching lower $p$ and larger $d$ make the observed gain less dramatic in the depolarizing case.}

\section{Background}%
\label{Background}

In this section we will present the preliminary material as well as the notation we use. We also
give a brief description of the role the statistical mechanics mapping plays in the estimation of
optimal logical performance. For a more detailed discussion, we refer the reader to~\cite{dennis2002b}.

\subsection{Preliminaries}%
\label{Preliminaries}

Consider the Hilbert space $\mathcal{H} = (\mathbb{C}^2)^{\otimes n}$ on $n$ qubits and the Pauli
group $\mathcal{P}$ with elements $g \in \mathcal{P}$ given by $g = \lambda g_1 \otimes g_2 \otimes
\cdots \otimes g_n$ for $\lambda \in \{ \pm 1, \pm i\}$ and where $g_i \in \mathcal{P}_i =
\{\mathbbm{1}, X, Y, Z\}$ are single Pauli operators. A stabilizer code is a quantum error
correcting code defined by a stabilizer group $\mathcal{S} \subset \mathcal{P}$, with $-\mathbbm{1}
\notin \mathcal{S}$, under which the code space $\mathcal{H}_{\vec{0}}$ is invariant, i.e.
\begin{equation}
    \mathcal{H}_{\vec{0}} = \{ \psi \in \mathcal{H} : S\psi = \psi, \,\; \forall S\in \mathcal{S}\}.
\end{equation}

For a stabilizer group of rank $r$, the code space encodes the quantum information of $n-r$ logical
qubits. The quantum information is subject to noise, described by a noise model. We here consider
noise models that contain only errors $e\in \mathcal{P}$. Each such error is assigned a probability
$p_e$. Coherent errors can be written as a decomposition in terms of elements in $\{\mathbbm{1}, X,
Y, Z\}$, and quantum error correction performed through repeated projective Pauli measurements leads
to a digitization of errors, so that the ability to correct a finite set of errors is enough to
correct any error~\cite{Nielsen_Chuang_2010}.

We consider a setting where the generators of the stabilizer group are repeatedly measured. If an
error is such that the system is in an eigenspace other than the code space after the following
round of stabilizer measurements, we refer to it as a \emph{detectable} error. From the stabilizer
measurements we obtain a string of $\pm 1$ eigenvalues $(-1)^{s_1}, (-1)^{s_2}, \ldots, (-1)^{s_r} $,
where $\vec{s} \in \mathbb{Z}_2^r$ is referred to as the \textit{syndrome}. The syndrome $\vec{s} =
\vec{0}$ is the trivial syndrome characterizing the code space.

In terms of the syndromes, the Hilbert space decomposes as:
\begin{equation}
    \mathcal{H} = \bigoplus_{\vec{s}} \mathcal{H}_{\vec{s}}.
\end{equation}
We say that a Pauli operator $f$ has syndrome $\vec{s}$ if and only if $f S_k = (-1)^{s_k}S_kf$ for
all \edit{stabilizer generators} $S_k \in \mathcal{S}$. The previous statement is also equivalent to saying that $f$
has syndrome $\vec{s}$ if and only if $f\mathcal{H}_0 = \mathcal{H}_{\vec{s}}$. Based on this
identification, we can speak not only of the syndrome of an eigenspace, but also of the syndrome of
an error.

The role of a \emph{decoder} is to, for any detectable error $e$, return a correction operator $f$
based on its measured syndrome $\vec{s}$, in such a way that $fe$ act trivially on the encoded
information. Let $\mathcal{C}(\mathcal{S})$ be the centralizer of the stabilizer group. For a given
syndrome $\vec{s}$, the set of Pauli operators with this syndrome is
$g(\vec{s})\mathcal{C}(\mathcal{S})$ for some fixed representative $g({\vec{s}})$. The centralizer
contains $\bar{L} \in \mathcal{C}(\mathcal{S})\backslash \mathcal{S}$, the logical Pauli operators,
as well as the stabilizers themselves. Since the stabilizer group is a subgroup of the centralizer,
the set of all Pauli operators with syndrome $\vec{s}$ can be partitioned into a disjoint union of
equivalence classes under stabilizer multiplication. Relative to a representative $g({\vec{s}})$, we
denote these classes as $\mathcal{C}_{\vec{s},\bar{L}} = g(\vec{s})\bar{L}\mathcal{S}$, referring to
them as logical equivalence classes. The decomposition is then given by
\begin{equation}
    g(\vec{s})\mathcal{C}(\mathcal{S}) = \bigcup_{\bar{L}} \mathcal{C}_{\vec{s},\bar{L}}.
\end{equation}
Taking the example of a stabilizer code with a single logical qubit, $ \bar{L} \in \{\mathbbm{1} ,
\bar{X}, \bar{Y}, \bar{Z}\}$ and the set of Pauli operators with syndrome $\vec{s}$ is partitioned
as follows:
\begin{equation}
    g(\vec{s})\mathcal{C}(\mathcal{S}) =
        \mathcal{C}_{\vec{s},\mathbbm{1}} \cup \mathcal{C}_{\vec{s},\bar{X}}
                                          \cup \mathcal{C}_{\vec{s},\bar{Y}}
                                          \cup \mathcal{C}_{\vec{s},\bar{Z}},
\end{equation}
with
\begin{alignat}{2}
    \mathcal{C}_{\vec{s}}^{(\mathbbm{1})}
        &= g({\vec{s}})\mathcal{S},
        &\quad\quad \mathcal{C}_{\vec{s},\bar{Y}}
        &= g({\vec{s}})\bar{Y}\mathcal{S}, \\
    \mathcal{C}_{\vec{s},\bar{X}}
        &= g({\vec{s}})\bar{X}\mathcal{S},
        &\quad\quad \mathcal{C}_{\vec{s},\bar{Z}}
        &= g({\vec{s}})\bar{Z}\mathcal{S}.
\end{alignat}

Based on these equivalence classes two Pauli operators $f$ and $g$ are considered to be equivalent
if they belong to the same logical equivalence class and we write $f \sim_\mathcal{S} g$. After
syndrome extraction, the task of a decoder boils down to finding a Pauli operator $g
\sim_\mathcal{S} e$ where $e$ is the error that occurred. The decoder succeeds if $g$ and $e$ belong
to the same logical equivalence class $\mathcal{C}_{\vec{s},\bar{L}}$, and fails otherwise.

The optimal decoding strategy is for the decoder to always return a recovery operation from the most
likely logical error class $\mathcal{C}_{\vec{s}}^{ML}$ for the given syndrome $\vec{s}$. After the
action of the noise channel, stabilizer measurement and extraction of syndrome $\vec{s}$, the
(unormalized) state can be written as the following linear combination:
\begin{equation}
    \rho(\vec{s}) =
        \sum_{\bar{L}} P(\mathcal{C}_{\vec{s},\bar{L}}| \vec{s})
        g(\vec{s})\bar{L} \rho \bar{L} g(\vec{s}),
\end{equation}
for some fixed, hermitian Pauli $g({\vec{s}})$, and logical operators $\bar{L}$. More precisely, a
\textit{maximum likelihood decoder} returns a Pauli operator $g \in \mathcal{C}_{\vec{s}}^{ML}$
where \edit{
\begin{equation}
    \mathcal{C}_{\vec{s}}^{ML} = \arg \max_{\mathcal{C}_{\vec{s},\bar{L}}}P(\mathcal{C}_{\vec{s},\bar{L}}| \vec{s}).
\end{equation}}
Maximum likelihood decoding succeeds with probability
\begin{equation}
    P^{ML}_{success} = \sum_{\vec{s}} P(\vec{s}) P(\mathcal{C}_{\vec{s}}^{ML}| \vec{s}).
\end{equation}

It is generally hard to compute the probabilities of the error classes. Meanwhile, it is for some
codes and noise models easy to find a maximum probability error. This motivates the alternative
strategy of \emph{maximum probability decoding}.  In this case, given a syndrome $\vec{s}$,  the
decoder outputs a recovery operation $e(\vec{s})$ such that $P(e(\vec{s})) \geq P(e'(\vec{s})) \quad
\forall e'(\vec{s})$.

In this case, the logical class $\mathcal{C}^{MP}_{\vec{s}}$ that the recovery operator belongs to
is given by \edit{
\begin{equation}
    \mathcal{C}^{MP}_{\vec{s}}
        = \arg \max_{\mathcal{C}_{\vec{s},\bar{L}}} \max_{\;\; e\in \mathcal{C}_{\vec{s}, {\bar{L}}}} P(e)
\end{equation}}
\edit{(chosen at random among all such classes in the case of degeneracy)} and the decoder succeeds with probability
\begin{equation}
    P^{MP}_{success} = \sum_{\vec{s}} P(\vec{s}) P(\mathcal{C}_{\vec{s}}^{MP}| \vec{s}).
\end{equation}

$P^{MP}_{success} \leq P^{ML}_{success}$, with strict inequality in general.

A third decoding strategy of interest in what follows is to return a Pauli operator belonging to the
class that contains the largest number $n_{\max}$ of maximum probability errors $e_{\max}$ (errors
such that $p(e_{\max}) \geq p(f) \;\forall f$). We refer to this as \textit{degeneracy enhanced MP
decoding} (dMP), as it also accounts for the degeneracy among maximum probability errors.
If all classes containing such errors have an equal number of maximum probability errors, dMP
decoding reduces to MP decoding. For instance, this holds whenever there is a unique maximum
probability error.

The logical class $\mathcal{C}^{dMP}_{\vec{s}}$ that the dMP recovery operator belongs to is given by \edit{
\begin{equation}
    \mathcal{C}^{dMP}_{\vec{s}} =
        \arg \max_{\mathcal{C}_{\vec{s},\bar{L}}}n_{\max}(\mathcal{C}_{\vec{s}, {\bar{L}}}| \vec{s})
\end{equation}}
and the decoder succeeds with probability
\begin{equation}
    P_{dMP}^{success} = \sum_{\vec{s}} P(\vec{s}) P(\mathcal{C}_{\vec{s}}^{dMP}| \vec{s}).
\end{equation}

In summary:

\begin{center}
\noindent\fbox{%
    \parbox{0.95\textwidth}{%
~

    {$\text{Correction returned by ML decoder: }
        \quad e({\vec{s}})  \in \mathcal{C}_{\vec{s}}
        \quad\text{ s.t. }
        \quad P(\mathcal{C}_{\vec{s}}) \geq P(\mathcal{C}'_{\vec{s}})
        \quad \forall \mathcal{C}'_{\vec{s}}$.}\\

    {$\text{Correction returned by MP decoder: }
        \quad e({\vec{s}})
        \quad \quad \quad \; \text{ s.t. }
        \quad P(e({\vec{s}})) \geq P(e'({\vec{s}}))
        \quad \forall e'({\vec{s}})$.} \\

    {$\text{Correction returned by dMP decoder:}
        \quad e({\vec{s}}) \in \mathcal{C}_{\vec{s}}
        \quad \text{ s.t. }
        \quad n_{\max}(\mathcal{C}_{\vec{s}}) \geq  n_{\max}(\mathcal{C}'_{\vec{s}} )
        \quad \forall \mathcal{C}'_{\vec{s}} $.}\\
    }}
\end{center}

\subsection{Error Class Probabilities in the Statistical Mechanical Mapping}%
\label{StatMechMapping}

Maximum likelihood decoding relies on computing error class probabilities.  For  stabilizer and
subsystem codes, this problem can be mapped to computing partition functions in disordered
statistical mechanics models\cite{dennis2002b}, after which existing tools for computation or
approximation of partitions functions can be used~\cite{thomas_exact_2009, wang_efficient_2001,
chubb2021e, piveteau2024}. In particular, this mapping relates the optimal threshold of quantum
error correcting codes to a critical point in an order-disorder phase transition in the statistical
mechanics model. The statistical mechanics mapping was initially done for the toric code under
bitflip noise, but has also been extended to other noise models such as depolarizing
noise~\cite{bombin2012a}, and correlated noise including circuit level noise~\cite{chubb2021c,
piveteau2024, rispler2024}. Additionally,~\cite{sriram2024} considers non-uniform noise models with
long-range spatio-temporal correlations for the repetition code and the toric code. In this section
we give an overview of the mapping, focusing on independent qubit noise in stabilizer codes and
referring the reader to~\cite{chubb2021c, pryadko2020} for details on the treatment of correlated
noise and subsystem codes. We present the example of the toric code under bitflip noise in more
detail.

To each \edit{stabilizer generator} $S_k$ we associate a classical spin degree of freedom $\sigma_k \in \{-1,1\}$.
The goal of the statistical mechanics mapping is to provide, for each error $e$, a Hamiltonian
$H_e(\{ \sigma_k \})$, such that the Boltzmann weight of the all spin up configuration
$\{\Uparrow\}$ gives the error probability, and the quenched disorder partition function $Z(e)$
(with $e$ seen as the disorder realization) gives error class probability:
\begin{align}
    \label{BoltzAndZ}
    e^{-\beta H_e(\{\Uparrow \})} &= P(e) \\
    Z(e)  = \sum_{ \{ \sigma_k \} } e^{-\beta H_e(\{ \sigma_k \})} & = P(\mathcal{C}(e))
\end{align}
for a suitable inverse temperature $\beta$. In what follows, it is often convenient to write the
partition function in terms of the density of states $g(E)$ as $Z(\errorclass) = \sum_E
g_{\errorclass}(E) e^{-\beta E}$, with $E$ the energy.

Below, we show the expression for such a Hamiltonian in a general stabilizer code, for a noise model
where each qubit $i$ fails independently, and where the qubit errors are single Pauli operators,
$g_i \in \mathcal{P}_i = \{ \mathbbm{1}, X,Y,Z \}$ with probabilities $p_i(\mathbbm{1}),
p_i(X),p_i(Y),p_i(Z)$.

Following~\cite{chubb2021c} we write the Hamiltonian in terms of the scalar commutator $\db{.\, ,.}:
\mathcal{P} \times \mathcal{P} \rightarrow \mathbb{C}$, which is defined by the following normalized
trace of the group commutator
\begin{equation}
    \db{A,B} \coloneq \frac{1}{2}\Tr[A,B],
\end{equation}
with the group commutator being given by $[A,B] \coloneq ABA^{-1}B^{-1}$.

The Hamiltonian takes the form \edit{
\begin{equation}
    \label{general_hamiltonian}
    H_e(\{\sigma_i\}) = -\sum_{i}\sum_{g_i \in \mathcal{P}_i}
        J_i(g_i) \, \db{g_i, e} \prod_{k:\db{g_i, S_k} = -1} \sigma_k
\end{equation}}
with coupling strength $J_i(g_i)$ defined by the Nishimori condition
\begin{equation}
    \beta J_i(g_i) = \frac{1}{|\mathcal{P}|}\sum_{f_i  \in \mathcal{P}_i}
        \log p_i(f_i)\db{g_i, f_i^{-1}}, \quad \forall i \in \{1,\ldots,n\}, g_i \in \mathcal{P}_i.
\end{equation}
The Hamiltonian in Eq.~\eqref{general_hamiltonian} is symmetric under stabilizer multiplication of
the error configuration as described in~\cite{chubb2021c}. This means that the Hamiltonian of an error $e'$,
which differs from another error $e$ by multiplication of a \edit{stabilizer generator} $S_k$, gives the same energy
for a given spin configuration as the Hamiltonian of the error $e$ gives for a related spin
configuration where the corresponding stabilizer spin degrees of freedom is flipped.

For the toric code under bitflip noise, this Hamiltonian reduces to the Hamiltonian of the Random
Bond Ising Model (RBIM). Here, the partition function under quenched disorder can be efficiently
computed by Pfaffian methods~\cite{thomas_numerically_2013}. The toric code and surface code under
bitflip noise will be the focus of the numerical examples presented in this paper. We show a sketch
of the RBIM phase diagram in Fig.~\ref{fig:RBIM-phase}.

\begin{figure}
    \centering
    \includegraphics[scale=1]{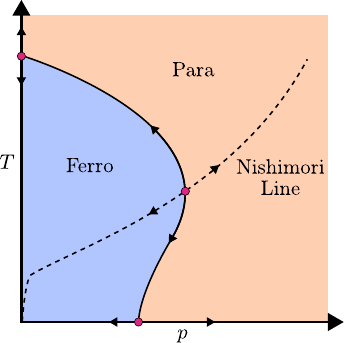}
    \caption{A sketch of the Random Bond Ising Model phase diagram, adapted from~\cite{picco2006},
    with the reentrance of the phase boundary exaggerated for the purpose of illustration. The
    arrows indicate the renormalization group flow.}
    \label{fig:RBIM-phase}
\end{figure}

For the toric code under independent identically distributed bitflip noise, the Hamiltonian of the
corresponding statistical mechanics model is defined, for each error configuration $\errorchain$, as
\begin{equation}\label{Hamiltonian_RBIM}
    H_\errorchain = - \sum_{\langle kl \rangle} J_{\langle kl \rangle} \sigma_k \sigma_l - K
\end{equation}
with the sum taken over all edges $\langle kl \rangle$. The spin degrees of freedom on the vertices
$k,l,\ldots$ correspond to the toric code stabilizers of $Z$-type, and the qubits in the toric code sit
on the edges $\langle kl\rangle$.

For an error $e$ the couplings are set to
\begin{equation}J_{\langle kl \rangle} =
    \begin{cases}
    -J,& \text{if } \langle kl \rangle \in \errorchain \\
    \phantom{-}J,              & \text{otherwise}
\end{cases}
\end{equation}
with the relation $e^{-2\beta J} = \frac{p}{1-p}$ and $e^{-2\beta K} = p(1-p)$ at the Nishimori
line. Here $\langle kl \rangle \in e$ means that the error $e$ has support on the qubit at edge
$\langle kl \rangle$.

In what follows, we keep the couplings $J$ fixed and vary the temperature. We denote by $T_{\text{Nish}}$
the temperature such that the above relation between $\beta_{\text{Nish}} =1/T_{\text{Nish}}$ and the couplings
$J,K$ is fulfilled. For identically distributed (uniform) bitflip noise, we fix $J=\pm 1$,
$T_{\text{Nish}}=\frac{2}{\ln((1-p)/p)}$. The Hamiltonian in Eq.~\eqref{Hamiltonian_RBIM} then fulfills
Eqs.~\eqref{BoltzAndZ}, so that a decoder that returns a correction based on the largest partition
function at $T=T_{\text{Nish}}$ is a maximum likelihood decoder.
The interaction strength can not uniformly be normalized in the case of non-uniform bitflip noise
with individually sampled qubit error probabilities $p_{i}$. Hence, for non-uniform bitflip noise,
we set the temperature fixed while varying the couplings:
$J_{i}=\frac{1}{2}\ln(\frac{1-p_{i}}{p_{i}})$ and $T_{\text{Nish}}=1$.

\section{Success rates for partition function decoders}%
\label{SuccessRates}

The goal of this section is to provide a unifying description of the success rates of ML, MP and dMP
decoders in terms of ratios of partition functions. Sampling these ratios generally allows for more
efficient estimation of the logical performance than directly sampling the number of decoding
failures.

\subsection{The maximum partition function decoder and the probabilistic partition function decoder}

The statistical mechanics mapping relates the probability of an error class $\mathcal{C}_{\vec{s}}$
to the partition function of any error $e\in \mathcal{C}_{\vec{s}}$ evaluated on the Nishimori line.
It is also of interest to consider the partition function at other temperatures. In particular, we
show how the zero temperature values measure the success rates of MP and dMP decoders. We note that
in the zero temperature analysis in~\cite{dennis2002b,chubb2021c} the focus has been on energies
rather than partition functions, which does not account for the contribution from degeneracy.

In what follows, we denote by $Z^T(\errorclass_{\vec{s}})$ the partition function of any
$\errorchain \in \mathcal{C}_{\vec{s}}$ evaluated at temperature $T$. For a given syndrome $\vec{s}$
and temperature $T$, we denote by $\{ \errorclass_{\vec{s}}^{\text{max}}(T)\} $ the set of all
classes $\errorclass_{\vec{s}}^{\text{max}}(T)$  such that
$Z^T(\errorclass_{\vec{s}}^{\text{max}}(T)) \geq Z^T(\errorclass_{\vec{s}}) $ for all
$\mathcal{C}_{\vec{s}}$.
Below the optimal threshold, in the infinite distance limit and at $T=T_{\text{Nish}}$, this set contains
only one element $ \mathcal{C}_{\vec{s}}^{\max}(T_{\text{Nish}})$, and its partition function increasingly
dominates over the others:
$Z^{T_{\text{Nish}}}(\errorclass_{\vec{s}}^{\max}(T_{\text{Nish}})) \to 1$ while
$Z^{T_{\text{Nish}}}(\errorclass_{\vec{s}})  \to 0$ for all $\errorclass_{\vec{s}} \neq
\errorclass_{\vec{s}}^{\max}(T_{\text{Nish}})$.\footnote{In the toric code, where the logical operators
correspond to nontrivial loops on the torus, an equivalent statement is that the free energy cost of
inserting a system-spanning domain wall diverges on the Nishimori line below the optimal threshold.}
At temperatures $T\neq T_{\text{Nish}}$, the set $\{\mathcal{C}_{\vec{s}}^{\max}(T)\}$ may contain more
than one element, even at noise levels below the optimal threshold. In the infinite temperature
limit, it will contain all classes consistent with a given syndrome, while in the zero temperature
limit it will contain all classes that contain the largest number $n_{\max}$ of maximum probability
errors.

For a given temperature $T$, we define the following two partition function based decoding
strategies:

\begin{definition}
    \label{maximum_Z_decoder_definition}
    A \textit{maximum partition function decoder at temperature T} is a decoder that, for each
    syndrome $\vec{s}$, returns a recovery operation belonging to one of the classes in $\{
    \errorclass_{\vec{s}}^{\text{max}}(T) \}$, chosen at random each time the decoder is called,
    with equal probability for each such class.
\end{definition}

\begin{definition}
    \label{probabilistic_Z_decoder_definition}
    A \textit{probabilistic partition function decoder at temperature T} is a decoder that, for each
    syndrome $\vec{s}$, returns a recovery operation belonging to a class
    $\tilde{\errorclass}_{\vec{s}}$ with probability
    \begin{equation}
        P_{\text{decoder}}(\tilde{\errorclass}_{\vec{s}})
            = \frac{Z^T(\tilde{\errorclass}_{\vec{s}})}{\sum_{\errorclass_{\vec{s}}}
            Z^T(\errorclass_{\vec{s}})}.
    \end{equation}
\end{definition}

In the following proposition, the first statement (which we include for completeness) has been well
established, going back to~\cite{dennis2002b}, while the distinction between the second and the
third statements has to the best of our knowledge not been made.

\begin{proposition}
    \label{decoder_statements}
    The following three statements hold:

    \textit{1.} A maximum partition function decoder at $T=T_{\text{Nish}}$ is an ML decoder.

    \textit{2.} A maximum partition function decoder at $T=0$ is a dMP decoder.

    \textit{3.} A probabilistic partition function decoder at $T=0$ is an MP decoder
\end{proposition}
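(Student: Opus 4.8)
The plan is to rewrite the class partition function $Z^{T}(\errorclass_{\vec{s}})$ directly as a sum of error probabilities and then read off its $T\to 0$ behaviour, which immediately identifies the limiting decoding distribution. From the mapping of Section~\ref{StatMechMapping} — Eq.~\eqref{BoltzAndZ} together with the stabilizer symmetry of the Hamiltonian~\eqref{general_hamiltonian} — the $2^{r}$ spin configurations entering $Z(e)$ are in bijection with the $2^{r}$ errors of the logical class $\errorclass(e)$, the configuration associated with $e'\in\errorclass(e)$ having energy $H_{e}=-T_{Nish}\log P(e')$. Hence, at temperature $T$,
\begin{equation}
    Z^{T}(\errorclass_{\vec{s}})
        \;=\; \sum_{e'\in\errorclass_{\vec{s}}} e^{-\beta H_{e}(\{\sigma(e')\})}
        \;=\; \sum_{e'\in\errorclass_{\vec{s}}} P(e')^{\,T_{Nish}/T},
\end{equation}
which is just the density-of-states form $Z(\errorclass)=\sum_{E}g_{\errorclass}(E)e^{-\beta E}$ written out, with $g_{\errorclass_{\vec{s}}}(E)$ counting the $e'\in\errorclass_{\vec{s}}$ with $P(e')=e^{-E/T_{Nish}}$. (At $T=T_{Nish}$ this reproduces $Z^{T_{Nish}}(\errorclass_{\vec{s}})=P(\errorclass_{\vec{s}})$, the fact underlying statement~1.)

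Next I would take $T\to 0^{+}$. Fix a syndrome $\vec{s}$, write $p^{*}_{\vec{s}}=\max\{P(f): f\text{ has syndrome }\vec{s}\}$ for the probability of a maximum probability error, and let $n_{\max}(\errorclass_{\vec{s}})$ be the number of errors in $\errorclass_{\vec{s}}$ attaining $p^{*}_{\vec{s}}$, so that $n_{\max}(\errorclass_{\vec{s}})=0$ exactly when $\errorclass_{\vec{s}}$ contains no maximum probability error. Dividing the numerator and denominator of $P_{\text{decoder}}$ by $(p^{*}_{\vec{s}})^{T_{Nish}/T}$ and using that $(P(e')/p^{*}_{\vec{s}})^{T_{Nish}/T}\to\mathbbm{1}[P(e')=p^{*}_{\vec{s}}]$ as $T\to 0^{+}$ — a pointwise limit inside a finite sum, hence exchangeable with it — yields
\begin{equation}
    \lim_{T\to 0^{+}} P_{\text{decoder}}(\errorclass_{\vec{s}})
        \;=\; \lim_{T\to 0^{+}}\frac{\sum_{e'\in\errorclass_{\vec{s}}}(P(e')/p^{*}_{\vec{s}})^{T_{Nish}/T}}
        {\sum_{\errorclass'_{\vec{s}}}\sum_{e'\in\errorclass'_{\vec{s}}}(P(e')/p^{*}_{\vec{s}})^{T_{Nish}/T}}
        \;=\; \frac{n_{\max}(\errorclass_{\vec{s}})}{\sum_{\errorclass'_{\vec{s}}} n_{\max}(\errorclass'_{\vec{s}})}.
\end{equation}
In density-of-states language this is the statement that $Z^{T}(\errorclass_{\vec{s}})$ is dominated as $T\to 0^{+}$ by its lowest-energy term $g_{\errorclass_{\vec{s}}}(E^{*}_{\vec{s}})\,e^{-E^{*}_{\vec{s}}/T}$, where $E^{*}_{\vec{s}}$ is the minimum energy over all classes with syndrome $\vec{s}$ and $g_{\errorclass_{\vec{s}}}(E^{*}_{\vec{s}})=n_{\max}(\errorclass_{\vec{s}})$, so the common factor $e^{-E^{*}_{\vec{s}}/T}$ cancels in the ratio.

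Finally I would identify this limit with MP decoding. The right-hand side is supported only on logical classes containing a maximum probability error, so the $T=0$ probabilistic partition function decoder always returns a correction from such a class; taking a maximum probability representative within the chosen class — consistent with the definition and immaterial for the logical action — makes every returned correction a maximum probability error, i.e. an MP decoder. Since all elements of a given class $\errorclass_{\vec{s},\bar{L}}$ implement the same logical operation, the real content is that the induced distribution over logical classes, and hence the success rate $\sum_{\vec{s}}P(\vec{s})\sum_{\bar{L}}P_{\text{decoder}}(\errorclass_{\vec{s},\bar{L}})P(\errorclass_{\vec{s},\bar{L}}|\vec{s})$, matches that of the MP decoder which breaks ties uniformly over all maximum probability errors. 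I expect this last identification to be the delicate step: the MP decoder of Section~\ref{Preliminaries} is only pinned down up to the generally nonunique choice of which maximum probability error to output, so the argument must make explicit that the comparison is at the level of logical equivalence classes and that $T=0$ selects this particular, natural tie-breaking rule — just as $T=0$ turns the \emph{maximum} partition function decoder into dMP rather than an arbitrary MP rule (statement~2). It is also worth noting that the argument uses the usual regime $T_{Nish}>0$ (physical error rate below $1/2$), which is what makes minimum energy correspond to maximum probability and $T\to 0^{+}$ (so $\beta\to+\infty$) select the maximum probability errors.
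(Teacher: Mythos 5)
Your proof is correct and follows essentially the same route as the paper's: statement~1 from $Z^{T_{Nish}}(\errorclass_{\vec{s}})=P(\errorclass_{\vec{s}})$, and statements~2 and~3 from the observation that, after dividing out the ground-state Boltzmann factor, each class partition function tends to $n_{\max}(\errorclass_{\vec{s}})$ as $T\to 0^{+}$ (your normalization by $(p^{*}_{\vec{s}})^{T_{Nish}/T}$ is exactly the paper's multiplication by $e^{E_{\min}(\vec{s})/T}$, and the identification of ground-state degeneracy with the number of maximum probability errors is the same step, valid because $T_{Nish}>0$). The only differences are presentational: you write the density-of-states sum directly over errors as $\sum_{e'}P(e')^{T_{Nish}/T}$ rather than over energies, you dispatch statement~2 somewhat implicitly as a corollary of the same limit, and you are more explicit than the paper about the tie-breaking and the fact that the comparison is at the level of logical classes.
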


\begin{proof}
    We prove each of the three statements in turn.

    \vspace{0.2cm}
    \textit{1.} The proof of this statement immediately follows from $P(\errorclass_{\vec{s}}) =
        Z^{T_{\text{Nish}}}(\errorclass_{\vec{s}})$.

    \vspace{0.2cm}
    \textit{2.} Writing the partition function as $Z^T(\errorclass_{\vec{s}}) = \sum_E
        g_{\errorclass_{\vec{s}}}(E) e^{-E/T}$, with $g_{\errorclass_{\vec{s}}}(E)$ the density of
        states for the class $\errorclass_{\vec{s}}$, we denote by $E_{\min}(\errorclass_{\vec{s}})
        $ the lowest energy $E$ such that $g_{\errorclass_{\vec{s}}}(E) \neq 0$, and denote by
        $E_{\min}(\vec{s}) = \min_{\errorclass_{\vec{s}}} E_{\min}(\errorclass_{\vec{s}}) $ the
        lowest energy among all error classes consistent with the syndrome $\vec{s}$.  Normalizing
        by the lowest energy Boltzmann weight,
        \begin{equation}
            \lim_{T
            \to 0} Z^T(\errorclass_{\vec{s}})e^{ E_{\min}(\vec{s})/T}
                = \begin{cases}
                    g(E_{\min}(\errorclass_{\vec{s}}))  &
                        \text{if } E_{\min}(\errorclass_{\vec{s}}) = E_{\min}(\vec{s}) \\
                    0 & \text{else}
            \end{cases}
        \end{equation}
        with $g(E_{\min}(\errorclass_{\vec{s}}))$ the number of lowest energy disorder realizations
        present in $\errorclass_{\vec{s}}$. Since the probability of an error is given by its
        Boltzmann weight at $T_{\text{Nish}}$, and the Boltzmann weight monotonically  decreases with $E$,
        the lowest energy errors are the most probable and
        $Z^T(\errorclass_{\vec{s}})e^{E_{\min}(\vec{s})/T} = n_{\max}(\errorclass_{\vec{s}})$, from
        which the second statement follows.

    \vspace{0.2cm}
    \textit{3.} Finally, a probabilistic partition function decoder at zero temperature returns a
        correction belonging to $\tilde{\errorclass}_{\vec{s}}$ with a probability
        \begin{equation}
            P_{\text{decoder}}(\tilde{\errorclass}_{\vec{s}})
                = \frac{n_{\max}(\tilde{\errorclass}_{\vec{s}})}{\sum_{\errorclass_{\vec{s}}}
                    n_{\max}(\errorclass_{\vec{s}})},
        \end{equation}
        which is equivalent to returning a maximum probability error $e$, with equal probability for
        each such error.
\end{proof}

\subsection{The decoding probability and the order probability}

We next introduce decoding probability, alongside the definition of order probability used
in~\cite{thomas2011b}, and show that the former measures the success rate of maximum partition
function decoders, while the latter measures the success rate of probabilistic partition function
decoders.

\begin{definition}
    \label{decoding_probability_definition}
    For a temperature $T$ and syndrome $\vec{s}$, let $\errorclass_{\vec{s}}^*(T)$ be an error class
    chosen at random from  $\{\errorclass^{\text{max}}_{\vec{s}}(T)\}$ with uniform probability. The
    \emph{decoding probability at temperature $T$} is defined as the following average over disorder
    realizations $e$ and class choices $\errorclass^*$:

    \begin{equation}\label{decoding_probability_definition_eq}
        P_{\text{d}}(T) = \left\langle
            \frac{%
                Z^{T_{\text{Nish}}}(\errorclass_{\vec{s}}^*(T))
            }{%
                \sum_{\errorclass_{\vec{s}}} Z^{T_{\text{Nish}}}(\errorclass_{\vec{s}})
            } \right\rangle_{\!\!e,\errorclass^*}
        \equiv
        \sum_e P(e) \sum_{\errorclass_{\vec{s}(e)}^*(T) }
            \frac{1}{|\{\errorclass^{\max}_{\vec{s}(e)}(T) \}|}
            \frac{Z^{T_{\text{Nish}}}(\errorclass_{\vec{s}(e)}^*(T))}{%
                \sum_{\errorclass_{\vec{s}(e)}} Z^{T_{\text{Nish}}}(\errorclass_{\vec{s}(e)})}.
    \end{equation}
\end{definition}

\begin{definition}
    \label{order_probability_definition}
    For each disorder realization $e$, let $\tilde{\errorclass}_{\vec{s}}$ be its error class, $e\in
    \tilde{\errorclass}_{\vec{s}}$.
    The \emph{order probability at temperature $T$} is defined as the following average over
    disorder realizations $e$:

    \begin{equation}
        P_{\text{o}}(T) = \left\langle
            \frac{Z^T(\tilde{\errorclass}_{\vec{s}})}{%
                \sum_{\errorclass_{\vec{s}}} Z^T(\errorclass_{\vec{s}})}
        \right\rangle_{\!\!e}
        \equiv
        \sum_e P(e)
        \frac{Z^T(\tilde{\errorclass}_{\vec{s}})}{\sum_{\errorclass_{\vec{s}}} Z^T(\errorclass_{\vec{s}})}.
    \end{equation}
\end{definition}

We note that the sum over partition functions in the denominator of
Eq.~\eqref{decoding_probability_definition_eq} is equal to one, as the partition functions at
$T_{\text{Nish}}$ are equal to class probabilities. Including the sum explicitly has the advantage of
making the expression valid also when the partition functions are computed only up to an overall
normalization.

While both the decoding probability and the order probability go to $1/N$ in the limit $T\to\infty$,
with $N$ being the number of equivalence classes for each syndrome, their values at finite
temperature are generally different. At  $T=0$ their values are equal if, for each syndrome
$\vec{s}$, \edit{$n_{\max}$ is identical for all classes in $\{ \errorclass_{\vec{s}}^{\max}\}$}.

\begin{proposition}
    \label{decoding_proposition}
    The success rate of a maximum partition function decoder at temperature $T$ is equal to the
    decoding probability at temperature $T$.
\end{proposition}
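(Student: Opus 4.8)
The plan is to compute the success rate of the maximum partition function decoder at temperature $T$ by first peeling off the decoder's internal randomness, then substituting the statistical mechanics identification of class probabilities with Nishimori-temperature partition functions, and finally matching the result to the right-hand side of Eq.~\eqref{decoding_probability_definition_eq}.

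First I would fix a syndrome value $\vec{s}$ and compute the success probability conditioned on the event $\vec{s}(e) = \vec{s}$. By Definition~\ref{maximum_Z_decoder_definition} the decoder draws a class $\errorclass^*_{\vec{s}}$ uniformly from $\{\errorclass^{\max}_{\vec{s}}(T)\}$, returns a recovery operation from it, and — by the success criterion of Section~\ref{Preliminaries} — succeeds precisely when $\errorclass^*_{\vec{s}}$ equals the actual logical class of the error. Since the decoder's choice is independent of the error, the conditional probability of that event is $P(\errorclass^*_{\vec{s}}\,|\,\vec{s})$, and averaging over the decoder's choice as well gives
\begin{equation}
    P_{\mathrm{succ}}(T\,|\,\vec{s})
        = \frac{1}{|\{\errorclass^{\max}_{\vec{s}}(T)\}|}
          \sum_{\errorclass^*_{\vec{s}} \,\in\, \{\errorclass^{\max}_{\vec{s}}(T)\}} P(\errorclass^*_{\vec{s}}\,|\,\vec{s}),
\end{equation}
so that the overall success rate is $P_{\mathrm{succ}}(T) = \sum_{\vec{s}} P(\vec{s})\,P_{\mathrm{succ}}(T\,|\,\vec{s})$.

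Second, I would use the statistical mechanics mapping in the form $P(\errorclass_{\vec{s}}\,|\,\vec{s}) = Z^{T_{Nish}}(\errorclass_{\vec{s}})\big/\sum_{\errorclass'_{\vec{s}}} Z^{T_{Nish}}(\errorclass'_{\vec{s}})$ (Eq.~\eqref{BoltzAndZ}, written as a ratio so that the normalization is manifest and the argument goes through even when the partition functions are known only up to an overall constant). With this substitution $P_{\mathrm{succ}}(T\,|\,\vec{s})$ is exactly the factor multiplying $P(e)$ in the sum defining $P_{\text{d}}(T)$ in Eq.~\eqref{decoding_probability_definition_eq}, evaluated at $\vec{s} = \vec{s}(e)$, with the decoder's uniformly chosen class playing the role of $\errorclass^*_{\vec{s}}(T)$ there. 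Since this factor depends on $e$ only through its syndrome, $\sum_e P(e)\,g(\vec{s}(e)) = \sum_{\vec{s}} P(\vec{s})\,g(\vec{s})$, so the right-hand side of Eq.~\eqref{decoding_probability_definition_eq} equals $\sum_{\vec{s}} P(\vec{s})\,P_{\mathrm{succ}}(T\,|\,\vec{s}) = P_{\mathrm{succ}}(T)$, which is the claim. (As a consistency check, at $T = T_{Nish}$ and $T = 0$ this reproduces $P^{ML}_{success}$ and $P_{dMP}^{success}$, in agreement with Proposition~\ref{decoder_statements}.)

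The argument is essentially a matter of unwinding the definitions carefully, and the only step where care is genuinely needed is the conditional computation: the decoder's internal coin flip must be averaged jointly with the disorder realization, and the two temperatures must be kept cleanly separated. The temperature $T$ enters only through the eligible set $\{\errorclass^{\max}_{\vec{s}}(T)\}$ and its cardinality, whereas $T_{Nish}$ enters only through the class probabilities $P(\errorclass_{\vec{s}}\,|\,\vec{s})$ that score a correct guess; this is precisely why the decoding probability of Eq.~\eqref{decoding_probability_definition_eq} has its numerator and denominator at $T_{Nish}$ but its class selection at $T$. I expect keeping this distinction straight — rather than any calculation — to be the main obstacle.
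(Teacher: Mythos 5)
Your proposal is correct and follows essentially the same route as the paper's proof: decompose the success rate by syndrome, use the uniformity of the decoder's class choice over $\{\errorclass^{\max}_{\vec{s}}(T)\}$ to get the factor $1/|\{\errorclass^{\max}_{\vec{s}}(T)\}|$ times the sum of class probabilities, substitute $P(\errorclass_{\vec{s}}\,|\,\vec{s}) = Z^{T_{Nish}}(\errorclass_{\vec{s}})/\sum_{\errorclass'_{\vec{s}}} Z^{T_{Nish}}(\errorclass'_{\vec{s}})$, and rewrite the syndrome average as an average over disorder realizations. The only cosmetic difference is that the paper sums over the true class of the error and assigns each class a success probability, whereas you average the class probability over the decoder's coin flip directly, but these are the same double sum.
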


\begin{proof}
     We separate the averages over disorder realizations $e$ into averages for each syndrome $\vec{s}$,
    \begin{equation}
        P_{\text{success}} = \sum_{\vec{s}} P(\vec{s})P_{\text{success}}(\vec{s})
        =
        \sum_{\vec{s}} P(\vec{s})
        \sum_{\errorclass_{\vec{s}}}
            P(\errorclass_{\vec{s}}|\vec{s})
            P_{\text{success}}(\errorclass_{\vec{s}}).
    \end{equation}
    For disorder realizations $e\in C_{\vec{s}}$, the decoder succeeds if the recovery operator
    belongs to $\errorclass_{\vec{s}}$. By definition~\ref{maximum_Z_decoder_definition}, the
    recovery operator for a syndrome $\vec{s}$ is chosen at random among
    $\{\errorclass_{\vec{s}}^{\max}(T) \}$ with probability
    $P_{\text{decoder}}(\errorclass_{\vec{s}}^{*}(T)) =
    \frac{1}{|\{\errorclass_{\vec{s}}^{\max}(T)\}|}$. Hence,
    \begin{equation}
        P_{\text{success}}(\errorclass_{\vec{s}})
        = \begin{cases}
            \frac{1}{|\{\errorclass_{\vec{s}}^{\max}(T)\}|}
                & \text{if } \errorclass_{\vec{s}} \in \{  \errorclass_{\vec{s}}^{\max} (T)\}\\
            0 & \text{else},
        \end{cases}
    \end{equation}
    so that
    \begin{equation}
         P_{\text{success}}(\vec{s}) =
            \sum_{C^* \in \{\errorclass_{\vec{s}}^{\max}(T)\}}
                P(C^*|\vec{s}) \frac{1}{| \{\errorclass_{\vec{s}}^{\max}(T)\} |}.
    \end{equation}
    Finally we substitute
    \begin{align}
        P(\mathcal{C}^*|\vec{s})
            &= Z^{T_{\text{Nish}}} (C^*)
                = \frac{Z^{T_{\text{Nish}}}(C^*)}{ \sum_{C_{\vec{s}}} Z^{T_{\text{Nish}}}(C_{\vec{s}}) },\\
        P(\vec{s}) &= \sum_{\substack{e \\ \vec{s}(e) = \vec{s}}} P(e)
    \end{align}
    to get
    \begin{align}
         P_{\text{success}}
            &= \sum_{\vec{s}} P(\vec{s})
               \sum_{\errorclass^* \in \{\errorclass_{\vec{s}}^{\max}(T)\}}
               \frac{1}{| \{\errorclass_{\vec{s}}^{\max}(T)\} |}
               \frac{Z^{T_{\text{Nish}}}(\errorclass^*)}{%
               \sum_{\errorclass_{\vec{s}}} Z^{T_{\text{Nish}}}(\errorclass_{\vec{s}}) }\notag \\
            &= \sum_e P(e)
               \sum_{\errorclass^* \in \{\errorclass_{\vec{s}(e)}^{\max}(T)\} }
               \frac{1}{|\{ \errorclass^{\max}_{\vec{s}(e)} (T) \}|}
               \frac{Z^{T_{\text{Nish}}}(\errorclass^*)}{%
               \sum_{\errorclass_{\vec{s}(e)}} Z^{T_{\text{Nish}}}(\errorclass_{\vec{s}(e)})} .
    \end{align}
\end{proof}

We note that at the Nishimori temperature, in a setting with two classes per syndrome, the above
per-syndrome measure of logical performance reduces to the following measure given
in~\cite{behrends2025}:
\begin{equation}
    P_{\text{failure}} = \sum_{\vec{s}} \min_q P_{q,\vec{s}},
\end{equation}
where the authors denote by $P_{0,\vec{s}}, P_{1,\vec{s}}$ the class probabilities for the two
classes with syndrome $\vec{s}$.

\begin{proposition}
    \label{order_proposition}
    The success rate of a probabilistic partition function decoder at temperature $T$ is equal to
    the order probability at temperature $T$.
\end{proposition}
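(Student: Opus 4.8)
The plan is to follow the same template as the proof of Proposition~\ref{decoding_proposition}, only swapping the uniform-over-maximizers recovery distribution for the Boltzmann-ratio distribution of Definition~\ref{probabilistic_Z_decoder_definition}. First I would split the success rate over syndromes and then over the logical classes consistent with each syndrome,
\begin{equation}
    P_{\text{success}} = \sum_{\vec{s}} P(\vec{s}) \sum_{\errorclass_{\vec{s}}} P(\errorclass_{\vec{s}}\,|\,\vec{s})\, P_{\text{success}}(\errorclass_{\vec{s}}),
\end{equation}
where $P_{\text{success}}(\errorclass_{\vec{s}})$ is the probability of correct decoding conditioned on the true error lying in class $\errorclass_{\vec{s}}$. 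Everything then reduces to evaluating this conditional probability.

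Next I would invoke the success criterion from Section~\ref{Preliminaries}: the decoder succeeds exactly when the returned recovery operator lies in the same logical equivalence class as the error that occurred. By Definition~\ref{probabilistic_Z_decoder_definition}, a probabilistic partition function decoder at temperature $T$ returns a recovery from class $\tilde{\errorclass}_{\vec{s}}$ with probability $Z^T(\tilde{\errorclass}_{\vec{s}})/\sum_{\errorclass_{\vec{s}}} Z^T(\errorclass_{\vec{s}})$, and this choice is made independently of which class actually contains the error; hence
\begin{equation}
    P_{\text{success}}(\errorclass_{\vec{s}}) = \frac{Z^T(\errorclass_{\vec{s}})}{\sum_{\errorclass_{\vec{s}}'} Z^T(\errorclass_{\vec{s}}')}.
\end{equation}
Substituting this together with $P(\vec{s})\,P(\errorclass_{\vec{s}}\,|\,\vec{s}) = \sum_{e \in \errorclass_{\vec{s}}} P(e)$, and then collapsing $\sum_{\vec{s}}\sum_{\errorclass_{\vec{s}}}\sum_{e\in\errorclass_{\vec{s}}}$ into a single sum $\sum_e$ over disorder realizations (each $e$ determining a unique syndrome $\vec{s}(e)$ and a unique class $\tilde{\errorclass}_{\vec{s}(e)}$, since the Paulis consistent with a syndrome partition into the logical classes), yields exactly the expression for $P_{\text{o}}(T)$ in Definition~\ref{order_probability_definition}.

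I do not expect a genuine obstacle here; as in Proposition~\ref{decoding_proposition} this is a short rearrangement once $P_{\text{success}}(\errorclass_{\vec{s}})$ is identified. The only point that needs care is keeping the two temperatures separate: the probabilities $P(e)$, $P(\vec{s})$, $P(\errorclass_{\vec{s}}\,|\,\vec{s})$ governing which syndrome and class occur are fixed by the physical noise model, i.e. by Boltzmann weights at $T_{Nish}$, whereas the decoder's class-selection weights involve partition functions at the arbitrary temperature $T$ — so, unlike in Proposition~\ref{decoding_proposition}, no $Z^{T_{Nish}}$ survives in the final ratio. Combining the $T=0$ case with statement~3 of Proposition~\ref{decoder_statements} then immediately gives that the order probability at zero temperature is the success rate of an MP decoder.
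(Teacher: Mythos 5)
Your proposal is correct and follows essentially the same route as the paper's proof: decompose the success rate per syndrome and per class, identify the conditional success probability with the decoder's class-selection probability $Z^T(\tilde{\errorclass}_{\vec{s}})/\sum_{\errorclass_{\vec{s}}} Z^T(\errorclass_{\vec{s}})$ from Definition~\ref{probabilistic_Z_decoder_definition}, and collapse the sums back into a single average over disorder realizations. Your added remark that the class-occurrence probabilities live at $T_{Nish}$ while the decoder's weights live at $T$ is a correct and worthwhile clarification, though the paper leaves it implicit.
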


\begin{proof}
    The proof follows the same structure as the proof of Proposition~\ref{decoding_proposition}. The
    success rate can again be considered on a per-syndrome basis, with the decoder succeeding
    whenever the class encountered is the same as the class chosen by the decoder in accordance to
    Definition~\ref{probabilistic_Z_decoder_definition}:
    \begin{align}
        P_{\text{success}}
            &= \sum_{\vec{s}} P(\vec{s})
               \sum_{\tilde{\errorclass}_{\vec{s}}}
               P(\tilde{\errorclass}_{\vec{s}}|\vec{s})
               P_{\text{decoder}}(\tilde{\errorclass}_{\vec{s}})
             = \sum_{\vec{s}} P(\vec{s})
               \sum_{\tilde{\errorclass}_{\vec{s}}} P(\tilde{\errorclass}_{\vec{s}}|\vec{s})
               \frac{Z^T(\tilde{\errorclass}_{\vec{s}})}{%
                    \sum_{\errorclass_{\vec{s}}} Z^T(\errorclass_{\vec{s}})} \notag \\
            &= \sum_e P(e)
                \frac{Z^T(\tilde{\errorclass}_{\vec{s}}(e))}{%
                \sum_{\errorclass_{\vec{s}}} Z^T(\errorclass_{\vec{s}})}.
    \end{align}
\end{proof}

For $T=T_{\text{Nish}}$ the ratio averaged over to obtain the order probability, $\frac{Z^T(\tilde{\errorclass}_{\vec{s}})}{\sum_{\errorclass_{\vec{s}}} Z^T(\errorclass_{\vec{s}})}$, also appears within the expression for coherent information of CSS codes under depolarizing noise found in Ref.~\cite{colmenarez2024fundamental}. Coherent information provides an alternative method for finding the optimal threshold that can provide accurate estimates at low distances \cite{colmenarez_accurate_2024}, but it generally only provides bounds on the optimal logical success rate itself. An example of a lower bound from coherent information is shown for CSS codes under decoherence in Ref.~\cite{Niwa_2025}.  \edit{At $T=T_{\rm Nish}$, the probabilistic partition function decoder is also referred to as the \emph{sampling decoder} and, in qubit CSS codes under Pauli noise, is shown in Ref. \cite{kim2026optimalrecoveryquantumerror} to correspond to the Petz recovery channel and to share threshold with the optimal decoder.}

Not only do the ratios of Definitions~\ref{decoding_probability_definition}
and~\ref{order_probability_definition} measure the success rates of the decoders defined in
Definitions~\ref{maximum_Z_decoder_definition} and~\ref{probabilistic_Z_decoder_definition} when
averaged over all disorder realizations and class choices, but we expect that for any given finite
sample size, an estimate of the success rate computed from these ratios will generally have a
narrower confidence interval than an estimate based on counting the number of successes by the
corresponding decoder. For each disorder realization, the partition function ratios that are sampled
over to estimate decoding probability and order probability,
\begin{align}
    \mathcal{O}_{\text{d}}
        &= \frac{Z^{T_{\text{Nish}}}(\errorclass_{\vec{s}}^*(T))}{%
            \sum_{\errorclass_{\vec{s}}} Z^{T_{\text{Nish}}}(\errorclass_{\vec{s}})}  \\
    \mathcal{O}_{\text{o}}
        &= \frac{Z^T(\tilde{\errorclass}_{\vec{s}})}{%
            \sum_{\errorclass_{\vec{s}}} Z^T(\errorclass_{\vec{s}})},
\end{align}
will yield values in the interval $[0,1]$, which narrows to $[1/N,1]$ on the Nishimori line for
$\mathcal{O}_{\text{d}}$\footnote{For an error class $\errorclass \in
\{\errorclass^\text{max}_{\vec{s}}(T_{\text{Nish}})\}$, $Z^{T_{\text{Nish}}}(\errorclass) = P(\errorclass) \geq
\frac{1}{N}$.}. Meanwhile, counting the number of successes means sampling over
\begin{equation}
    \mathcal{O}^{\prime} = \begin{cases}
        1 & \text{ if success}\\
        0 & \text{ else}
    \end{cases}
\end{equation}
which only takes values in $\{0,1\}$, leading to a larger sample variance.

\subsection{Performance comparisons}

There are two natural comparisons for the two types of decoder strategies defined above:
performance differences for the same strategy as $T$ is varied, and  performance difference between
the two strategies for a fixed $T$.

It has been shown that when $T$ is varied from $T_{\text{Nish}}$ to zero in the toric code under
independent identically distributed bitflip noise, the phase boundary of the corresponding
statistical mechanics model is reentrant. This shows that the probabilistic partition function
decoder has a lower threshold at zero temperature than at $T_{\text{Nish}}$ in this setting. In models with
a reentrant phase boundary, it is interesting to consider whether or not the decodability boundary
-- defined by the threshold of the maximum partition function decoder -- is also reentrant. The
distinction between decoding probability and order probability opens the possibility that, as long
as the noise rate is below the optimal threshold, maximum partition function decoders can succeed
outside the phase boundary. (We expect that a maximum partition function decoder generally has
better performance than a probabilistic partition function decoder. At $T=T_{\text{Nish}}$ this is clearly
the case: the maximum partition function decoder is optimal, while the probabilistic partition
function decoder is suboptimal unless all error classes are equally probable.)

When comparing the present discussion to the phase boundary discussion in~\cite{chubb2021c}, it is
important to distinguish between a maximum partition function decoder and a minimum free energy
decoder. The authors of~\cite{chubb2021c} consider the latter. They note that the negative logarithm
defining the free energy $F$ from the partition function,
\begin{equation}
    F(T) = -T \ln Z(T),
\end{equation}
is monotonically decreasing, and relate the two decoding strategies to each other. However, at zero
temperature the free energy decoder will lose the information about degeneracy,
\begin{align}
    \lim_{T\to 0}F(T)
        &= \lim_{T\to 0}  -T \ln (\sum_E g(E) e^{-E/T})\notag \\
        &= \lim_{T\to 0} -T \ln(g(E_{\min})) - T\ln e^{-E_{\min}/T} -T
            \ln ( \sum_{E'>E_{\min}}g(E')e^{-E'/T})\notag \\
        &= E_{\min},
\end{align}
making it an MP decoder at $T=0$ even though it is a maximum partition function decoder for all $T>0$.

Making this distinction, it is clear that while neither the decodability boundary nor phase boundary
can extend further to the right than the optimal threshold, lemma 3 of~\cite{chubb2021c} should at
$T>0$ be seen as a statement about the decodability boundary rather than the phase boundary. To
phrase the possible distinction between the boundaries differently: even if the free energy
difference between error classes does not diverge at $T\neq T_{\text{Nish}}$, the minimum free energy
decoder may still succeed at this temperature, provided that the difference remains nonzero and the
free energy does diverge at $T_{\text{Nish}}$. Indeed, in such a case a minimum free energy decoder can
even retain optimality in spite of a reentrant phase boundary, apart from at $T=0$. To make a
statement that also holds at $T=0$, we consider a maximum partition function decoder instead: A
maximum partition function decoder can retain optimality at $T\neq T_{\text{Nish}}$, as long as at least
one of the most likely classes $\errorclass^* \in \{\errorclass^{\max}_{\vec{s}}(T_{\text{Nish}})\}$ has a
partition function that remains larger than that of any less likely class.\footnote{For noise rates
below the optimal threshold there is only one such class at large enough distance, but for finite
distances there may be more than one.} In such cases the decodability boundary is vertical.
\begin{corollary}
    \label{optimality_proposition}
    The success rate of a maximum partition function decoder at temperature $T$ is equal to the
    success rate of an optimal decoder if there exists at least one class
    $\errorclass^*_{\vec{s}} \in \{\errorclass^{\max}_{\vec{s}}(T_{\text{Nish}})\} $ such that
    $Z^T(\errorclass^*_{\vec{s}}) > Z^T(\errorclass'_{\vec{s}})$ for all $\errorclass'_{\vec{s}}
    \notin \{ \errorclass^{\max}_{\vec{s}}(T_{\text{Nish}}) \}$.
\end{corollary}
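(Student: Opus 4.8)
The plan is to combine the two propositions already established with a short set-theoretic observation: the hypothesis forces the temperature-$T$ maximal set to sit inside the Nishimori-maximal set, syndrome by syndrome, after which the decoding probability at $T$ collapses onto the optimal success rate. Throughout I read the hypothesis of the corollary, which carries a free syndrome subscript, as holding for every syndrome $\vec{s}$ that can occur with nonzero probability.

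First I would record the baseline. By statement 1 of Proposition~\ref{decoder_statements}, a maximum partition function decoder at $T_{Nish}$ is an ML decoder, so the success rate of any optimal decoder equals $P^{ML}_{success} = P_{\text{d}}(T_{Nish})$ by Proposition~\ref{decoding_proposition}. Since $Z^{T_{Nish}}(\errorclass_{\vec{s}}) = P(\errorclass_{\vec{s}}|\vec{s})$, the set $\{\errorclass^{\max}_{\vec{s}}(T_{Nish})\}$ is exactly the collection of most likely classes for $\vec{s}$, and every class in it has the same conditional probability $P^{\max}_{\vec{s}} \coloneq \max_{\bar{L}} P(\errorclass_{\vec{s},\bar{L}}|\vec{s})$. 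Next I would show the containment $\{\errorclass^{\max}_{\vec{s}}(T)\}\subseteq\{\errorclass^{\max}_{\vec{s}}(T_{Nish})\}$: the left-hand set is nonempty because $Z^T$ attains a maximum over the finitely many classes, and if some $\errorclass'_{\vec{s}}\in\{\errorclass^{\max}_{\vec{s}}(T)\}$ failed to lie in $\{\errorclass^{\max}_{\vec{s}}(T_{Nish})\}$, the hypothesis would give $Z^{T}(\errorclass^*_{\vec{s}}) > Z^{T}(\errorclass'_{\vec{s}})$, contradicting that $\errorclass'_{\vec{s}}$ maximizes $Z^T$ over all classes.

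With the containment in hand, I would evaluate the decoding probability at temperature $T$. By Proposition~\ref{decoding_proposition} the success rate of the maximum partition function decoder at $T$ equals $P_{\text{d}}(T)$, whose per-syndrome value is the average of $Z^{T_{Nish}}(\errorclass^*_{\vec{s}}(T))$ over a uniform draw of $\errorclass^*_{\vec{s}}(T)$ from $\{\errorclass^{\max}_{\vec{s}}(T)\}$. Every class in this average lies in $\{\errorclass^{\max}_{\vec{s}}(T_{Nish})\}$, hence has $Z^{T_{Nish}}(\errorclass^*_{\vec{s}}(T)) = P^{\max}_{\vec{s}}$, so the per-syndrome value is $P^{\max}_{\vec{s}}$ regardless of the draw. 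Averaging over syndromes with weights $P(\vec{s})$ yields $P_{\text{d}}(T) = \sum_{\vec{s}} P(\vec{s})\, P^{\max}_{\vec{s}} = P_{\text{d}}(T_{Nish}) = P^{ML}_{success}$, which is the claim.

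I do not expect a genuine obstacle here; the only care needed is bookkeeping around edge cases — ensuring $\{\errorclass^{\max}_{\vec{s}}(T)\}$ is nonempty so the uniform draw in Definition~\ref{decoding_probability_definition} is well defined, and being explicit that the hypothesis must hold for each syndrome of nonzero weight, since otherwise the per-syndrome success probabilities would not all equal $P^{\max}_{\vec{s}}$ and the averaged equality could fail. Everything beyond that is a direct substitution into Propositions~\ref{decoder_statements} and~\ref{decoding_proposition}.
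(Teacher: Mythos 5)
Your proposal is correct and follows essentially the same route as the paper: use the hypothesis to show that every $Z^T$-maximal class must lie in $\{\errorclass^{\max}_{\vec{s}}(T_{Nish})\}$ (else the distinguished class $\errorclass^*_{\vec{s}}$ would strictly beat it, contradicting maximality), so the decoder at $T$ always returns a maximally likely class and its per-syndrome success probability coincides with that of the ML decoder. Your version merely spells out the containment argument and the edge cases (nonemptiness, universal quantification over syndromes) that the paper's two-sentence proof leaves implicit.
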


\begin{proof}
    By Definition~\ref{maximum_Z_decoder_definition}, if the criterion in
    Corollary~\ref{optimality_proposition} is fulfilled the maximum partition function decoder at
    temperature $T$ will return a maximally likely error class. By
    Definition~\ref{decoding_probability_definition} it will then have the same success rate as a
    maximum partition function decoder at $T_{\text{Nish}}$, which by Proposition~\ref{decoder_statements}
    is optimal.
\end{proof}

The above shows, in particular, that the thresholds of an MP decoder and a dMP decoder can differ
for codes and noise models where the phase boundary is reentrant. This brings us to the second type
of comparison: the performance difference between the two decoding strategies at a fixed
temperature. In the ordered phase at increasing distance, the two decoding strategies -- and hence
the two success rates -- will increasingly agree, as all partition functions but one go to zero.
However, the success rates may differ at finite distance even in the ordered phase, including in
cases where their thresholds are the same. In the next section, we study numerically how the
distinction between the two decoding strategies plays out in the toric code under bitflip noise,
focusing on $T=T_{\text{Nish}}$ and $T=0$.

We sum up the present section with the three ratios to be sampled over in order to estimate the
performance of the three decoders summarized at the end of Section~\ref{Preliminaries}:
\begin{center}
\noindent\fbox{%
    \parbox{0.99\textwidth}{%
\edit{
    \noindent For each disorder realization $e$, let $\tilde{\errorclass}_{\vec{s}(e)}$ be its error class and
    $\errorclass^*_{\vec{s}(e) }(T) \in \{ \errorclass^{\max}_{\vec{s} }(T) \}$ be chosen at random.}\\

    {$\text{Estimator for the performance of an ML decoder: }
        \frac{Z^{T_{\text{Nish}}}(\errorclass_{\vec{s}}^*(T))}{%
            \sum_{\errorclass_{\vec{s}}} Z^{T_{\text{Nish}}}(\errorclass_{\vec{s}})}
        \text{ at $T=T_{\text{Nish}}$.} $}\\

    {$\text{Estimator for the performance of an MP decoder: }
        \frac{Z^T(\tilde{\errorclass}_{\vec{s}})}{%
            \sum_{\errorclass_{\vec{s}}} Z^T(\errorclass_{\vec{s}})}
        \text{ at $T=0$.}$}\\

    {$\text{Estimator for the performance of a dMP decoder: }
        \frac{Z^{T_{\text{Nish}}}(\errorclass_{\vec{s}}^*(T))}{%
            \sum_{\errorclass_{\vec{s}}} Z^{T_{\text{Nish}}}(\errorclass_{\vec{s}})}
        \text{ at $T=0$.} $}\\
    }}
\end{center}

\section{Partition function based estimates of decoder performance in the toric code under bitflip noise}%
\label{FKT_results_section}

In this section we focus on the toric code under bitflip noise, mapped to the RBIM as detailed in
Section~\ref{StatMechMapping}. We numerically demonstrate Proposition~\ref{decoding_proposition} and
Proposition~\ref{order_proposition}, and also demonstrate that less samples are needed to reach a
certain precision (as measured by the width of the confidence interval) when estimating failure
rates using order probability and decoding probability than when counting decoder failures.

We mainly use Pfaffian methods to compute RBIM partition functions, adapting an implementation of
the Fisher–Kasteleyn–Temperley (FKT) algorithm by Thomas and
Middleton~\cite{thomas_numerically_2013} found
\href{https://github.com/a-alan-middleton/IsingPartitionFn}{here}. As a complement to this approach,
we also adapt a Wang-Landau simulation of partition functions~\cite{vogel_generic_2013}. The
Wang-Landau approach allows us to estimate partition values at zero temperature, whereas the FKT
algorithm can only approach this limit. Another reason for providing a Wang-Landau implementation is
that it can be easily adapted to other codes and noise models, in contrast to the FKT algorithm.
However, we expect that tensor network methods are likely to be a more robust choice \edit{for
settings such as circuit level noise~\cite{piveteau2024}. Further discussion comparing different methods for estimating partition functions can be found in Section~\ref{Discussion}.}
 We provide a brief description of the
Fisher-Kasteleyn-Temperley (FKT) and Wang-Landau (WL) algorithm in Appendix~\ref{FKT_appendix}
and~\ref{WL_appendix}. For an in-depth analysis of these methods, we refer the reader
to~\cite{kasteleyn_statistics_1961, temperley_dimer_1961, thomas_exact_2009,
thomas_numerically_2013, wang_efficient_2001, vogel_generic_2013, vogel_practical_2018}. 
The source code of our implementation is publicly available at~\cite{repo:fkt, repo:wl, repo:postprocessing}.

In Fig.~\ref{fig:0TlimitCombined}  we show a comparison between code performance estimates at zero
temperature generated by the WL algorithm and performance estimates at $T=0.1T_{\text{Nish}}$ generated by
the FKT algorithm. We see agreement of results within error bars. Further decrease of temperature
within the FKT algorithm comes at the cost of significant increase of required bits of precision. A
comparison between FKT results at $T=0.1T_{\text{Nish}}$ and FKT results at $T=0.01T_{\text{Nish}}$ is shown in
Appendix~\ref{ZeroTLimit}, Fig.~\ref{fig:fkt0Tlimit}, and produces close matching between the
logical failure curves.
Thus, estimating partition functions with the FKT algorithm at $T=0.1T_{\text{Nish}}$ indicates to be
sufficient to approximate the zero temperature limit with reasonable bits of precision.
Additionally, close matching between WL performance estimates at zero temperature and at
$T=0.1T_{\text{Nish}}$, shown in Appendix~\ref{ZeroTLimit}, Fig.~\ref{fig:wl0Tlimit}, indicates that
$T=0.1T_{\text{Nish}}$ is a close enough approximation of $T=0$ for the quantities under consideration.
Having established this, all subsequent partition function computations are performed using the FKT
algorithm, with $T=0.1T_{\text{Nish}}$ as a stand-in for zero temperature.

In what follows, we consider $10^{4}$ disorder realizations (samples) for $P\in[ 0.06, 0.12]$. For
each sample, we check whether or not the decoders in Definitions~\ref{maximum_Z_decoder_definition}
and~\ref{probabilistic_Z_decoder_definition} fail (``estimation by counting''), and compute the
ratios in Definitions~\ref{decoding_probability_definition} and~\ref{order_probability_definition}
(``estimation by ratio''). We compute error bars using bootstrapping~\cite{efron1987}, in order to
have a method that works for both estimation methods. We are using $1000$ resamples and $95\%$
confidence level for error estimates by bootstrapping. For the estimation by counting we also
compute error bars using a posterior beta distribution with the Jeffreys prior at $95\%$ confidence
level, and find that these agree with the error bars computed via bootstrapping.

In Fig.~\ref{fig:TNish_FKT} we demonstrate Propositions~\ref{decoding_proposition}
and~\ref{order_proposition} at $T=T_{\text{Nish}}$, and in Fig.~\ref{fig:lowT_FKT} at $T=0.1T_{\text{Nish}}$. In
both cases we see agreement within error bars between estimation by counting and by ratio.

Comparing the crossing points seen at $T=T_{\text{Nish}}$ to those seen at $T=0.1T_{\text{Nish}}$, which in the
large distance limit measure the thresholds of the different decoders, indicates that not only the
phase boundary but also the decodability boundary is reentrant. This shows that in the toric code
under bitflip noise, the criterion for Corollary~\ref{optimality_proposition} is not fulfilled.

Comparing the two decoding strategies at $T=0.1T_{\text{Nish}}$ shows the performance difference between MP
and dMP decoding. We find that the logical failure rates are almost identical at odd distances,
while dMP decoding performs better than MP decoding at small, even distances. The dependence on
parity at zero temperature will be discussed in more detail in the next section. At $T=T_{\text{Nish}}$, we
find that maximum partition function decoding generally outperforms probabilistic partition function
decoding for both even and odd distance, although the threshold is not visibly affected.

Moreover, by comparing decoding probability estimates at $T=T_{\text{Nish}}$ to those at $T=0.1T_{\text{Nish}}$,
which is shown in Fig.~\ref{fig:MLvsdMP}, we find that dMP decoding closely matches the ML decoding
performance for low code distances or low error rates. ML decoding increasingly outperforms dMP
decoding with increasing code distances and error rates.

The size of the error bars are not visible in Fig.~\ref{fig:TNish_FKT} and Fig.~\ref{fig:lowT_FKT},
but are as expected smaller for the estimation by ratio. To illustrate the improved sample
efficiency in estimation by ratio compared to estimation by counting, we estimate the fraction of
the sample size that is sufficient to obtain the same confidence interval. 
\edit{An example of the confidence interval sizes as a function of sample fraction is shown in Fig.~\ref{fig:confidence_intervals} (left) for $p=0.01,d=5$. In this example, less than $3\%$ of the sample size is required to reach as precise of an estimate with the ratio method as when using the full sample size and the counting method. Here, the full sample size is just below two million.\footnote{\edit{A few jobs were canceled due to competing jobs from other users, giving a total sample size of 1999131. Overall, for the confidence interval estimates within this section, the sample sizes vary for different $p$ and $d$. In all cases, saturation of the relative required sample size was observed, providing a sample size independent measure.}}} 
\edit{More generally, the reduction in number of shots needed depends on the desired confidence interval width, as well as on $p$ and $d$. Assuming a fixed target confidence interval width, for instance determined by requiring that the logical error curves of two different decoders can be distinguished unless they agree up to a certain precision, we compare the required number of shots when estimating the logical error rates by counting and by ratio, respectively. In the counting method, the width of the confidence interval as a function of the number of shots can be found by fitting an ansatz based on a posterior beta distribution with the Jeffreys prior, taking the failure fraction at a given $p,d$ as the fitting parameter. Meanwhile, for the ratio method, we find that the confidence interval width differs from that of the counting method by a scale factor $1/r(p,d)$, independent of the number of samples. An example is shown in Appendix~\ref{app:CIratio}.  
The relative required sample size $R$, which we define through the comparison of shots $N$ needed to reach a target confidence interval width $\Delta$,
\begin{equation}
    R = \frac{N_{\rm{ratio}}(\Delta)}{N_{\rm{counting}}(\Delta)},
\end{equation}
}\edit{saturates as $N \to \infty, \Delta \to 0$, with the saturation value depending on $p$ and $d$. The saturation is shown for the example of $p=0.01,d=5$ in Fig.~\ref{fig:confidence_intervals} (right).  The saturation value can be straightforwardly determined from $r(p,d)$ by observing that in the limit $N\to\infty$, the width of the confidence interval scales as $1/\sqrt{N}$. The saturation value of the relative required sample size is therefore given by
\begin{equation}
    R \xrightarrow[N \to \infty]{}  1/r(p,d)^2.
\end{equation}
Interestingly, we observe that in the toric code under bitflip noise, the value of $r(p,d)$ increases in the regimes of both large $p$ and low $p$, as compared to values closer to the threshold.  This is shown in Fig.~\ref{fig:ratio_p_dependency}. As a consequence, the ratio method becomes increasingly favorable in the low noise regime,  where regular sampling struggles.}

\begin{figure}
    \centering
    \includegraphics[width=\textwidth]{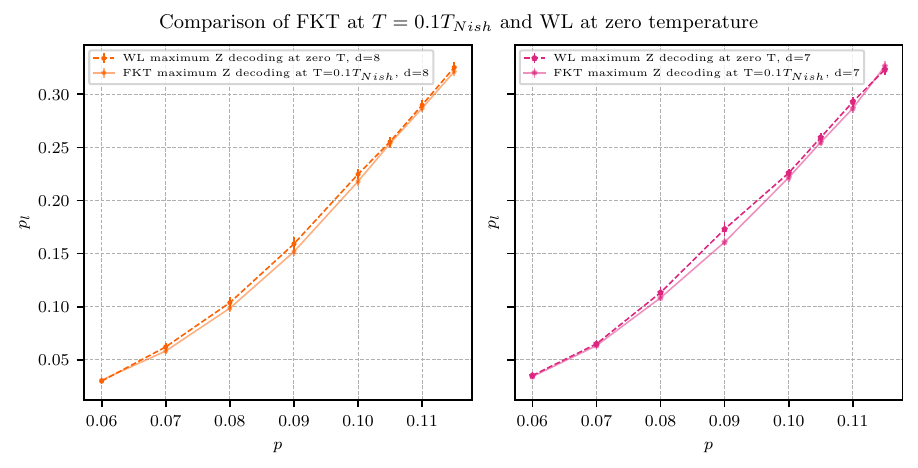}
    \caption{Matching within error bars of maximum partition function decoder performance estimates
    generated by WL at zero temperature and FKT at $T=0.1T_{\text{Nish}}$.}
    \label{fig:0TlimitCombined}
\end{figure}

\begin{figure}
    \centering
    \includegraphics[width=\textwidth]{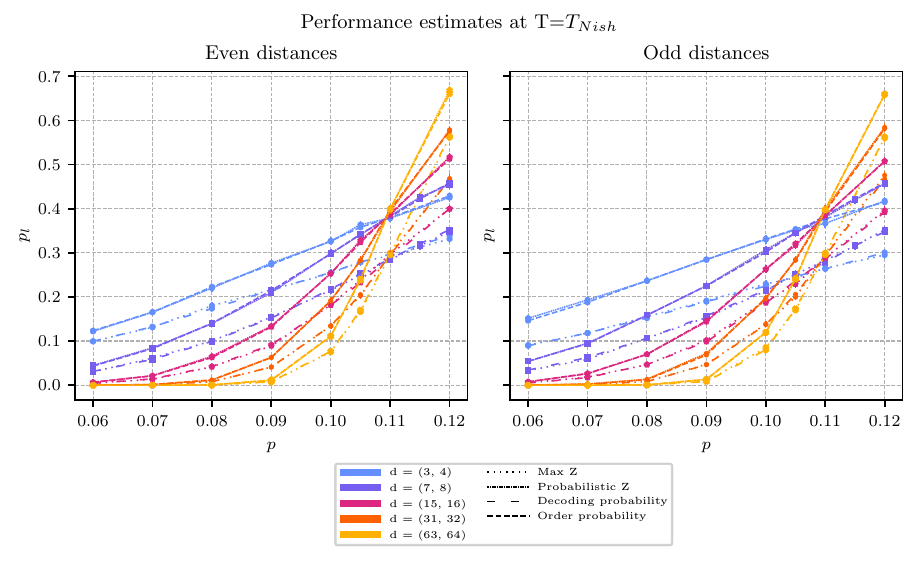}
    \caption{Performance estimates for both counting and ratio methods for probabilistic and maximum
    partition function decoding at $T=T_{\text{Nish}}$. Here, the performance estimates by counting are
    labeled ``Max Z'' and ``Probabilistic Z''. The success rate of ML decoding is measured by
    ``Max Z'' and the ``Decoding probability'' at $T=T_{\text{Nish}}$.
    We find matching within error bars between counting and ratio methods. Furthermore, we find
    improved performance of maximum partition function decoding over probabilistic partition function decoding.}
    \label{fig:TNish_FKT}
\end{figure}

\begin{figure}
    \centering
    \includegraphics[width=\textwidth]{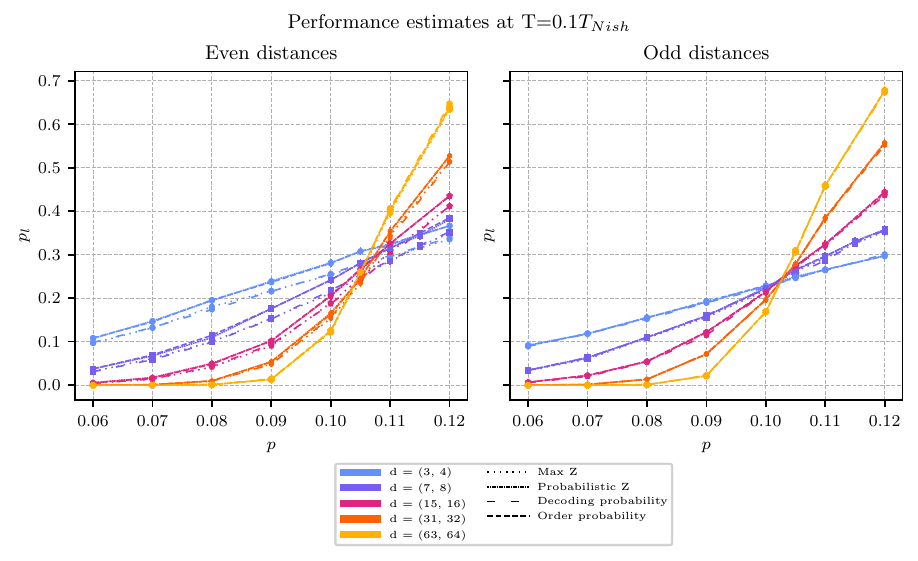}
    \caption{Performance estimates for both counting and ratio methods for probabilistic and maximum
    partition function decoding at $T=0.1T_{\text{Nish}}$. Here, the performance estimates by counting are
    labeled ``Max Z'' and ``Probabilistic Z''. The success rate of MP decoding is estimated by
    ``Probabilistic Z'' decoding and the ``Order probability'' while the success rate of dMP decoding is
    measured by ``Max Z'' and the ``Decoding probability'' in the low temperature limit. We find
    matching within error bars between counting and ratio methods. Furthermore, we find improved
    performance of maximum partition function decoding over probabilistic partition function decoding  for low, even
    distances, while the performance results match closely for odd distances.}
    \label{fig:lowT_FKT}
\end{figure}

\begin{figure}
    \centering
    \includegraphics[width=\textwidth]{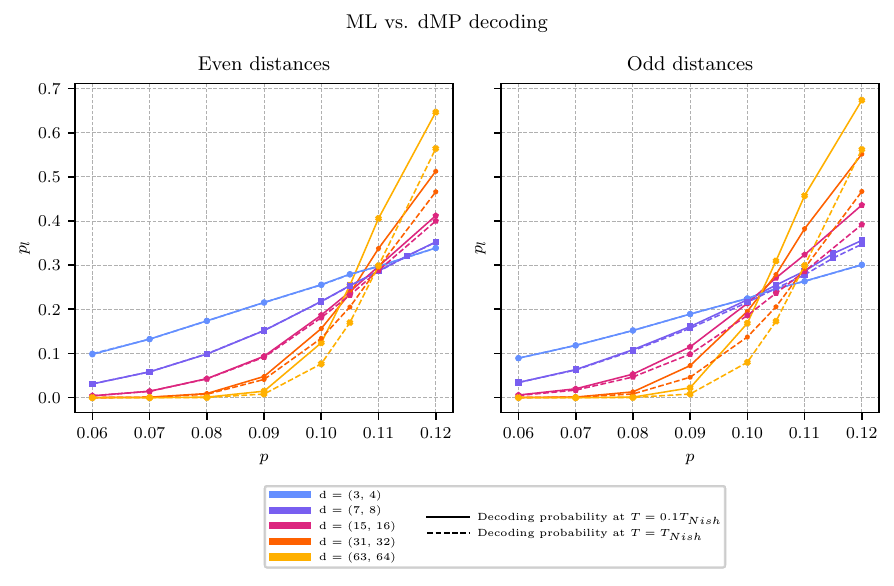}
    \caption{Comparison of dMP (estimated by decoding probability at $T=0.1T_{\text{Nish}}$) and ML
    decoding (estimated by decoding probability at $T=T_{\text{Nish}}$). We see an increase of
    performance advantage from ML over dMP decoding with an increase of code distance.}
    \label{fig:MLvsdMP}
\end{figure}

\begin{figure}
    \centering
    \includegraphics[width=\linewidth]{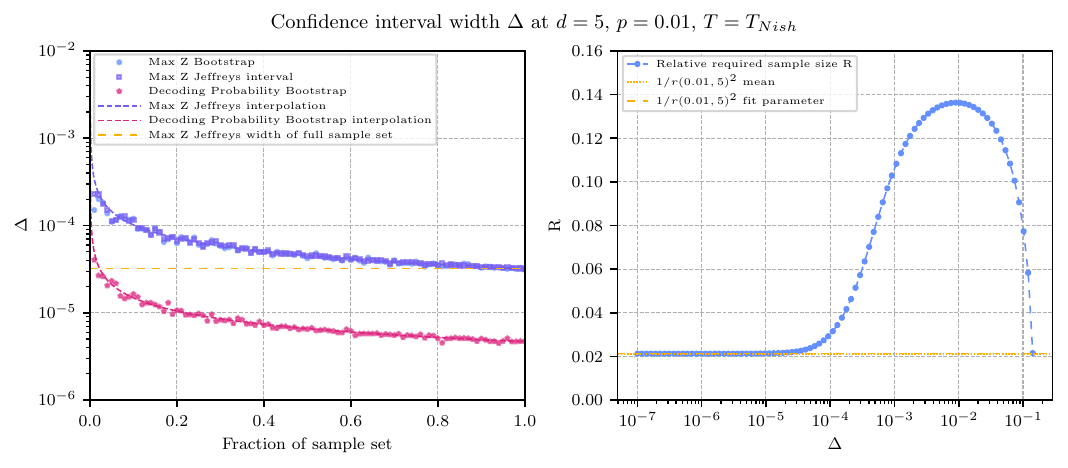}
    \caption{
    \edit{The left plot shows the confidence interval widths for estimates of the optimal logical error rates of the toric code under bitflip noise at $d=5, p=0.01$, for varying fractions of the sample set. The confidence interval widths are displayed for both the decoding probability (``ratio method'') and the maximum partition function counting method. 
    The counting method confidence interval width for the full sample set, generated by the Jeffreys interval, is shown as a horizontal line.
    Its intersection point with the curve corresponding to the ratio method indicates that $2.23\%$ of the sample set is sufficient for estimating the optimal code performance to the same accuracy with the ratio method. The right plot shows the dependency of the relative required sample size $R$ on the target confidence interval width $\Delta$. We find that $R \to 1/r(0.01,5)^{2}$ as $\Delta \to 0$. The value of $r(0.01,5)$ is independent of sample size,  as shown in Fig.~\ref{fig:CIratio}, and is estimated from Fig.~\ref{fig:CIratio} to be $r(0.01,5)\sim 6.87$. This value is compatible with the value of $r(0.01,5)\sim 6.86$ that is obtained by fitting the ratio method data points in the left plot.}}
    \label{fig:confidence_intervals}
\end{figure}

\begin{figure}
    \centering
    \includegraphics[width=0.5\linewidth]{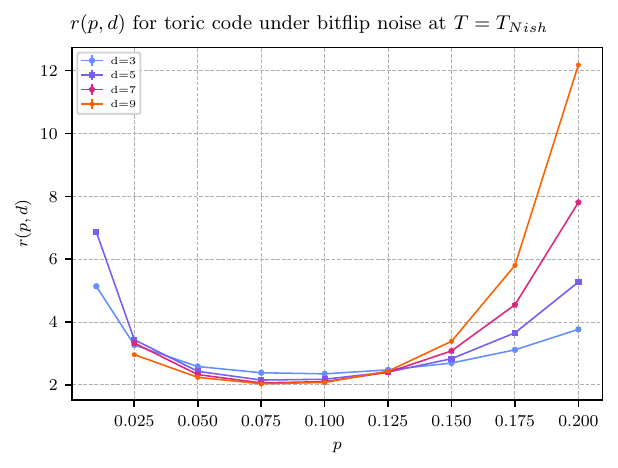}
    \caption{\edit{Scale factor $r(p,d)$ between the confidence interval widths of the counting and ratio methods for estimating optimal logical error rates, for different $p,d$. The scale factor is estimated as in Fig.~\ref{fig:CIratio}. 
    In the example of $d=5,p=0.01$, $r(0.01,5)\sim 6.87$ is compatible with the observed relative required sample size of $R=0.0223$ in Fig.~\ref{fig:confidence_intervals}. The plot only includes data points for $p,d$ where at least $10^2$ logical failures were observed with the counting method, excluding distances $d>5$ at the leftmost point $p=0.01$.}}
    \label{fig:ratio_p_dependency}
\end{figure}

\section{Bias, degeneracy and ensembling in matching based decoding}%
\label{BiasDegeneracyEnsemblingMatching}

In this section, we consider how MP and dMP decoders relate to decoding strategies based on minimum
weight perfect matching (MWPM). While a matching based decoder will return \emph{some} maximum
probability error, it does not necessarily return such errors in an unbiased
fashion~\cite{stace2010}. Comparing MP to MWPM shows how bias in a given matching algorithm affects
the performance.

To approximate dMP decoding, a matching based decoder moreover needs to return not only a single
matching, but an unbiased sampling of several such matchings that can be used to estimate the
relative fraction of them belonging to each class. Methods such as
\emph{ensembling}~\cite{shutty2024} can return several matchings by calling the matching algorithm
multiple times, each time with a random perturbation of the edge weights. We distinguish between
weak ensembling, where the perturbation is small enough that we only sample among minimum weight
matchings, and strong ensembling, where the perturbation is strong enough that we also sample among
higher weight matchings. To approximate dMP decoding, weak ensembling would be used. However, bias
may again affect performance, as ensembling does not return matchings in an unbiased fashion.  An
intuitive example for why this is the case is shown in Fig.~\ref{fig:snake_and_tadpole}. The
comparison of dMP decoding to weakly ensembled MWPM shows how this bias affects the performance. We
also discuss under what conditions weak ensembling would either not be expected to lead to
performance gains, or only lead to performance gains at small distances.

\subsection{Matching based decoding and ensembling}%
\label{section:modifiedMatchingBasedDecoder}

Minimum-weight perfect matching (MWPM) decoding is a maximum probability decoding technique that
finds a most probable correction operator consistent with stabilizer measurement outcomes. We
provide a brief overview of the method in Appendix~\ref{app:mwpm_decoder}.  Implementations of MWPM
decoding, such as sparse blossom~\cite{higgott2025} or fusion blossom~\cite{wu2023}, operate
deterministically. That is, given a matching graph and syndrome, they will always find the exact
same correction operator. Under a uniform bitflip noise model with error probability $p$, we employ
an ensembling~\cite{shutty2024} strategy by sampling $50$ matching graphs with their edge weights
perturbed to
\begin{equation}
    \label{eq:ensemblingPerturbation}
    w_i = \log\frac{1-(p+\xi_i)}{p+\xi_i},
\end{equation}
where $\xi_i\sim\mathcal{N}(0,\sigma_\xi^2)$. If a syndrome $\vec{s}$ is consistent with multiple
error configurations, this small, random modification of the edge weights allows the decoder to find
different correction operators for the different perturbed graphs.  We group the correction
operators into equivalence classes and select a representative correction operator for the
equivalence class found most often as the decoding result. The expectation is that this method on
average will decode to error classes with higher degeneracy, thus approximating dMP decoding,
provided $\sigma_\xi$ is chosen small enough to not introduce new shortest paths into the matching
graph.  As an alternative ensembling strategy, we also tried sampling isomorphic permutations of the
matching graph $\mathcal{G}_M$, where each permutation changes the order of nodes and edges. With a
deterministic decoder implementation, different permutations may result in different equivalent
matchings to be found. While this method produced slight improvements on the observed logical error
rate over MWPM decoding, it generally performed worse than the ensembling based on edge weight
perturbations discussed here.

In the following subsections we present and discuss the results of simulations with a range of
physical error rates for the toric code, as well as the unrotated and rotated planar surface code.
We set $\sigma_\xi=10^{-6}$, after optimizing this parameter as discussed in
Appendix~\ref{app:opt_ensembling}. Our implementation of ensembling, based on \texttt{PyMatching}, is available at~\cite{repo:mwpm}.

\begin{figure}
    \centering
    \includegraphics[width=0.4\linewidth]{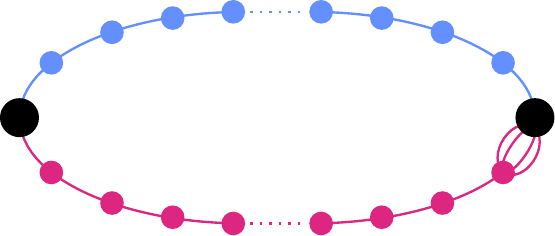}
    \caption{
    Depicted are five shortest paths between the black vertices, corresponding to minimum weight
    perfect matchings. The equivalence classes are indicated by color, with the upper (blue) path
    belonging to one class and the four lower (red) paths belonging to another. The red paths differ
    only by the last segment. A fair sampling of minimum weight perfect matchings would return the
    blue path 20\% of the time. In weak ensembling, the weight of each segment of the above paths is
    perturbed by a small amount. If the perturbations add up in such a way that the blue path is the
    shortest, this is the matching returned. However, given a large number $N_S$ of segments, the
    length difference between the blue path and any of the red paths is almost always determined by
    the first $N_S -1$ segments, so that as $N_S \to \infty$ the blue path is returned $50\%$ of the
    time. This demonstrates that weak ensembling does not sample fairly among matchings.}
    \label{fig:snake_and_tadpole}
\end{figure}

\subsection{Comparison of MP and dMP to matching based decoding in the toric code}%
\label{sect:partition_vs_matching_decoders}

We have established in Propositions~\ref{decoder_statements},~\ref{decoding_proposition}
and~\ref{order_proposition} that the order probability at $T=0$ estimates the success rate of a MP
decoder while the decoding probability at $T=0$ estimates the success rate of a dMP decoder. (In
practice, we use $T=0.1T_{\text{Nish}}$ as a stand-in for $T=0$, as discussed in
Section~\ref{FKT_results_section}). By comparison of logical failure rates from MWPM and weakly
ensembled MWPM decoding to the order probability at $T=0$ and the decoding probability at $T=0$
respectively, one can compare potentially biased matching decoders to their unbiased counterparts.
We note that the choice of normally distributed perturbations $\xi_{i}$ will on rare occasions lead
to sampling of higher weight matchings, but that such events are rare enough to not affect the
results significantly.

The logical performance estimates under MWPM, weakly ensembled MWPM, MP and dMP decoding are shown
for even and odd code distances in Fig.~\ref{fig:mwpmvsmp}. We observe that MWPM decoding suffers
from bias, which primarily affects the performance at even code distances negatively. In particular,
the logical failure rates of the MWPM decoder are noticeably higher than the failure rates of the MP
decoder at even distances beyond $d=4$. In contrast, for odd distances, the logical failure rates of
MWPM and MP decoding differ only slightly.  This parity dependent effect of bias on MWPM decoding is
closely related to the appearance of ground state degeneracies, which are more likely for even
distance toric codes as discussed in Section~\ref{sec:boundaryAndParity}.  Meanwhile, we find that
the code performance of the dMP decoder aligns closely with the performance estimates of weakly
ensembled MWPM decoding. Thus, the bias of weakly ensembled MWPM decoding does not visibly affect
its performance. This may be explained by the ensembling having a certain degree of robustness to
bias:  it requires solely the estimation of which error classes have the highest degeneracy, rather
than an estimation of their exact amounts of degeneracy.

\begin{figure}
    \centering
    \includegraphics[width=\textwidth]{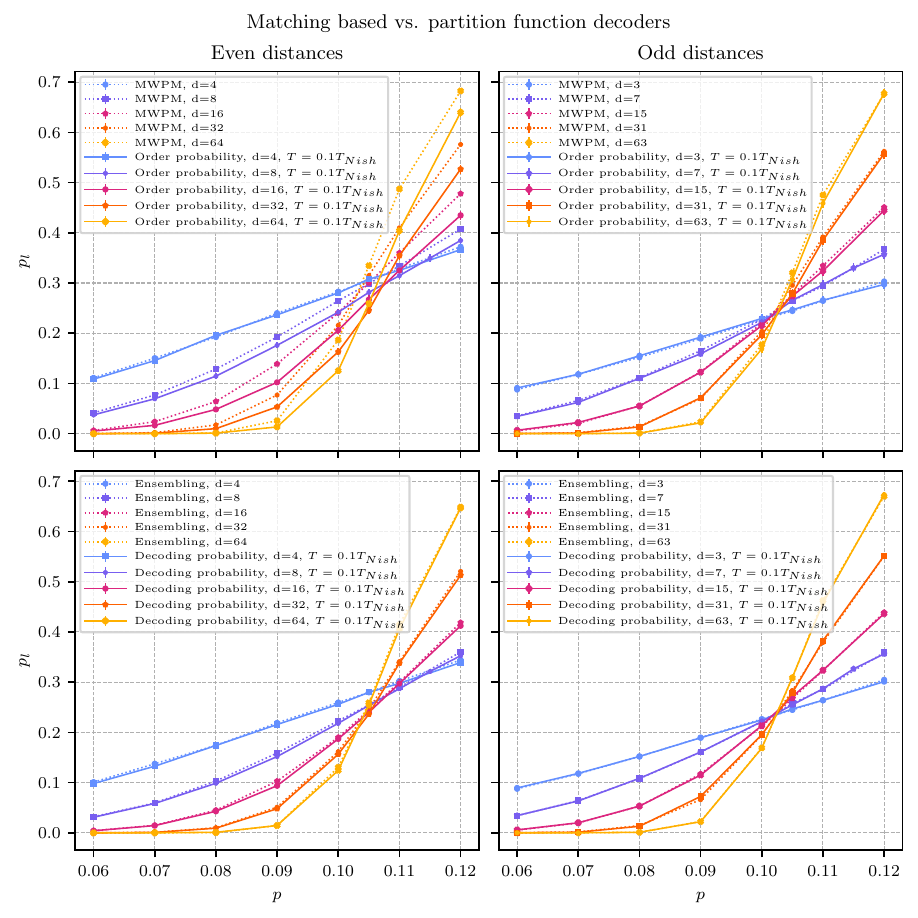}
    \caption{Comparison of potentially biased matching based decoders to their unbiased partition
    function decoder counterparts. In the absence of bias, an MWPM decoder is expected to act as an
    MP decoder while a weakly ensembled MWPM decoder is expected to behave as a dMP decoder. We notice a
    bias-induced deviation between decoding schemes primarily between MWPM and MP decoding at large even
    distances.}
    \label{fig:mwpmvsmp}
\end{figure}

\subsection{Boundary and parity dependence of performance effects from bias and weak ensembling}%
\label{sec:boundaryAndParity}

We have seen in Section~\ref{FKT_results_section} that the difference between MP and dMP depends on
the parity of the distance in the toric code. For degeneracy to matter at zero temperature, there
must be syndromes $\vec{s}$ such that there are at least two different error classes
$\errorclass^{(1)}_{\vec{s}}, \errorclass^{(2)}_{\vec{s}}$ fulfilling the criteria
$n_{\max}(\errorclass^{(1)}_{\vec{s}})>0, n_{\max}(\errorclass^{(2)}_{\vec{s}})>0$ and
$n_{\max}(\errorclass^{(1)}_{\vec{s}})\neq n_{\max}(\errorclass^{(2)}_{\vec{s}})$. In terms of
matchings, the condition is that there are syndromes consistent with at least two classes of minimum
weight perfect matchings, with one class containing more such matchings than the other. We
illustrate such a scenario for the $d=6$ toric code in Fig.~\ref{fig:6x6_torus_degen}. In addition
to governing the difference between MP and dMP, effects from bias in matching based decoders can
only occur when the first of these criteria is fulfilled.

Under the assumption of uniform bitflip noise, the existence of two classes such that
$n_{\max}(\errorclass^{(1)}_{\vec{s}})>0, n_{\max}(\errorclass^{(2)}_{\vec{s}})>0$ can be related to
the existence of even (Hamming) weight representatives of logical operators. Taking such a
representative of weight $2w$, we may write it as the disjoint union of two errors $e^{(1)},
e^{(2)}$ with the same syndrome, with $w(e^{(1)}) = w(e^{(2)}) = w$. Conversely, taking
$e^{(1)}\in\errorclass_{\vec{s}}^{(1)}$ and $e^{(2)}\in\errorclass_{\vec{s}}^{(2)}$ to be
inequivalent maximum probability errors of weights $w(e^{(1)}) = w(e^{(2)})$, there exists a logical
operator $e^{(1)}e^{(2)}$ with weight $2w(e^{(1)}) -  2w(e^{(1)} \cap e^{(2)})$, with $e^{(1)}\cap
e^{(2)}$ being the overlap of the two errors.  The lowest-weight even representative of a logical
operator thus provides a lower bound on how significant the effect of degeneracy can be, as it
lower-bounds the weight of errors that can give rise to syndromes $\vec{s}$ fulfilling the two
criteria above.

In the toric code at even distance, this lower bound allows for effects from degeneracy in the
leading term of the logical error rate, as the lowest-weight even logical representative is of
weight $d$. Considering also the criterion  $n_{\max}^{(1)}\neq n_{\max}^{(2)}$, however, shows that
degeneracy can only show up in the first subleading term.  A maximum probability error chain belongs
to a set of equivalent error chains of equal weight ($n_{\max}>1$) if and only if it can be deformed by
stabilizer multiplication without changing its weight. On the square lattice, it is clear by
inspection that this is only possible when the chain does not consist of only horizontal bonds or
only vertical bonds, which excludes the leading term contribution. Degeneracy only affects the first
subleading term, as illustrated in Fig.~\ref{fig:6x6_torus_degen}.  In the toric code at odd
distance, meanwhile, the effect of degeneracy is increasingly suppressed with $d$, since the
lowest-weight even logical representative is of weight $2d$ (crossing the torus ``diagonally''), as
illustrated in Fig.~\ref{fig:code_ambiguities} (a). From these considerations, we expect the
performance improvement from weak ensembling to be very suppressed in the toric code at odd
distance.

Extending the above considerations to the surface code, we expect improved logical performance from
weak ensembling in the unrotated surface code at both even and odd distances, but only at even
distances in the rotated surface code.  The unrotated surface code has stabilizers of odd weight
(the weight-three stabilizers at the boundary), so that both the odd and even distance code contain
low-weight logical representatives of even length. In Fig.~\ref{fig:code_ambiguities} (b) we
illustrate a syndrome in the unrotated surface code that fulfills both of the above criteria, with
leading-term effects on the logical performance. The rotated surface code contains only even-weight
stabilizers, and all logical representatives have the same parity. At odd distance, we therefore
expect that ensemling has no effect at all, while at even distance we again expect leading-term
effects, as illustrated in Fig.~\ref{fig:code_ambiguities} (c).

In Fig.~\ref{fig:mwpm_uniform_pert}, we show the effect of ensembling on logical performance for
even and odd distance in the toric code, unrotated surface code and rotated surface code. We see
that ensembling does not improve performance in the rotated surface code at odd distance, and barely
differs from MWPM in the toric code at odd distance, while for the other combinations of boundaries
and parities there is a noticeable (though modest) improvement. It should be noted that we have seen
in Section~\ref{sect:partition_vs_matching_decoders} that bias lowers the performance of MWPM
decoding compared to MP decoding. The performance gains from ensembling seen in the surface code
might similarly stem from ensembling having robustness to bias, rather than from the degeneracy
enhancement itself. As noted above, the effects of bias have a similar dependence on boundary
conditions and the parity of the distance as the effects of degeneracy.

\begin{figure}
    \centering
    \subcaptionbox{Defect pattern.\label{sfig:6x6_torus_degen_syn}}
        {\includegraphics{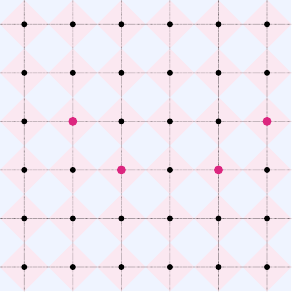}}%
    \hfill
    \subcaptionbox{Matching with degeneracy 1.\label{sfig:6x6_torus_degen_match1}}
        {\includegraphics{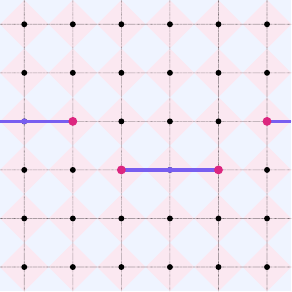}}%
    \hfill
    \subcaptionbox{Matching with degeneracy 4.\label{sfig:6x6_torus_degen_match2}}
        {\includegraphics{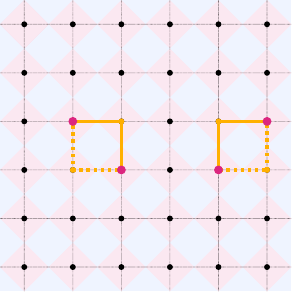}}%
    \caption{In Ref.~\cite{duclos2010} the authors give an example for a syndrome pattern with
    matchings of equal weight but different degeneracies, shown here on the 6$\times$6
    torus~(\subref{sfig:6x6_torus_degen_syn}). On a larger torus, this pattern can be repeated
    arbitrarily often in horizontal direction. The first matching, shown
    in~(\subref{sfig:6x6_torus_degen_match1}) corresponds to an error class with degeneracy 1. The
    syndrome admits another error class with degeneracy 4, shown in the matching
    in~(\subref{sfig:6x6_torus_degen_match2}).}
    \label{fig:6x6_torus_degen}
\end{figure}

\begin{figure}
    \centering
    \subcaptionbox{Distance 5 toric code.\label{sfig:5x5_toric_ineqiv}}
        {\includegraphics{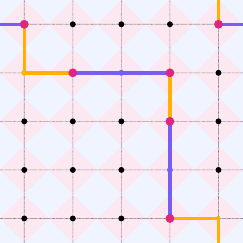}}%
    \hfill
    \subcaptionbox{Distance 5 unrotated surface code.\label{sfig:5x5_planar_ambig}}
        {\includegraphics{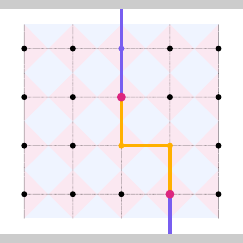}}%
    \hfill
    \subcaptionbox{Distance 6 rotated surface code.\label{sfig:6x6_rotated_ambig}}
        {\includegraphics{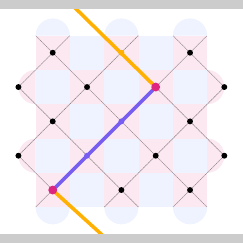}}%
    \caption{Example syndrome pattern for the distance 5 toric code with a chain of two possible
        matchings with equal weight, wrapping around the torus in both
        directions~(\subref{sfig:5x5_toric_ineqiv}). The matching shown in purple has degeneracy 1,
        whereas the matching colored orange has four equivalent alternatives. The unrotated planar
        surface codes admits such ambiguities as well~(\subref{sfig:5x5_planar_ambig}), here shown
        with a matching of degeneracy 1 (purple) that includes the boundary nodes (gray shaded area) and a
        matching with degeneracy 3 (orange). For the unrotated surface code, equivalent matchings of
        different degeneracy can only occur for even distance~(\subref{sfig:6x6_rotated_ambig}).}
    \label{fig:code_ambiguities}
\end{figure}

\begin{figure}
    \centering
    \includegraphics{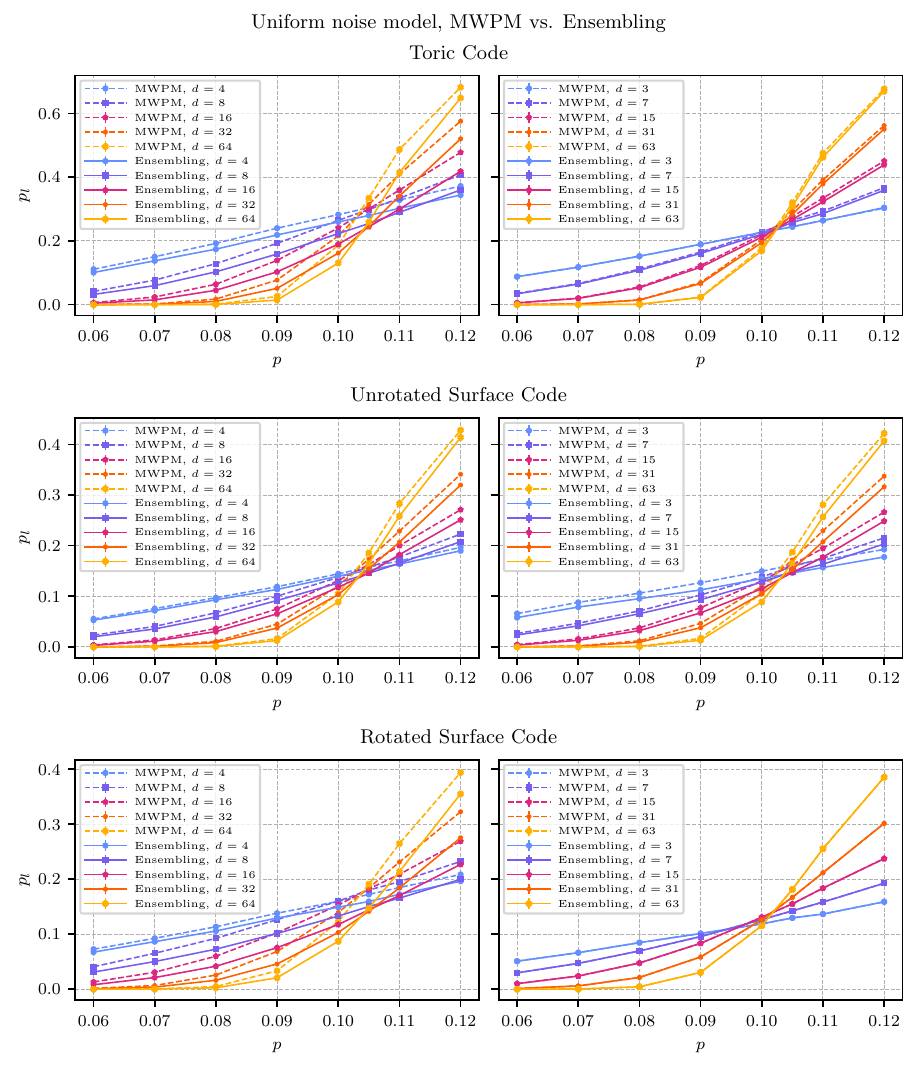}
    \caption{Comparison of  MWPM decoding with ensembling using 50 random perturbations on the
    decoder graph for the toric code, unrotated surface code and rotated planar surface code under a
    uniform noise model.  On the left the logical error curves for even distances are shown, on the
    right for odd distances. We find small improvements of the intersection of logical error curves and
    the physical error rates with ensembling for combinations of boundary conditions and distance parity
    that admit degeneracies.}
    \label{fig:mwpm_uniform_pert}
\end{figure}

\edit{\subsection{Degeneracy and bias in other noise models}\label{OtherNoise}}

\edit{
To explore if the effects of degeneracy and bias may be stronger in other noise models, we consider two candidates. For both the candidates we compare the performance under MWPM decoding and weakly ensembled MWPM decoding.

First, we note that the ``diagonal'' syndrome pattern of Fig. \ref{fig:6x6_torus_degen}, which gives rise to a degeneracy of two each time it occurs, is precisely the one caused by an error on the ancilla qubit in the middle of a ``bare ancilla'' syndrome extraction circuit under a certain CNOT ordering. The circuit takes the following form, with the relevant error propagation marked in red:
\begin{equation}
    \includegraphics[width=0.25\linewidth]{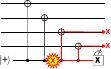}
\end{equation}
We consider a noise model that only includes bitflip errors at this particular point in the bare ancilla syndrome extraction circuit for weight 4 $X$-type stabilizers, corresponding to pairs of bitflip errors on the adjacent data qubits. The CNOT ordering, or equivalently the possible pairs of affected qubits, is chosen such that the distance of the code is not reduced. (The pairs of qubits are aligned orthogonally to the logical bitflip operator $\bar{X}$.) 
We furthermore split the noise model in the construction of the decoding graph, meaning that the MWPM decoder does not take into account the correlations between the errors, but instead performs matching on a decoding graph that corresponds to regular bitflip noise. (Such splitting is performed in matching-based decoders in order to reduce a hypergraph matching problem into a graph matching problem, making it tractable.) In this setting, we find that ensembling has a significant advantage for even distances, as seen in Fig.~\ref{fig:hook_error_plot}. Interestingly, for $d=4$, the observed performance difference is reversed: here, the bias in \texttt{PyMatching} apparently matches the patterns of correlated errors. We show this example separately in Appendix~\ref{app:further_plots}, where we also compare ensembled MWPM decoding to perturbed MWPM decoding (by which we mean considering a single perturbation of the decoding graph). This example illustrates how
biasing a matching-based decoder towards matchings that correspond to known correlations can improve performance.

\begin{figure}
    \centering
    \includegraphics{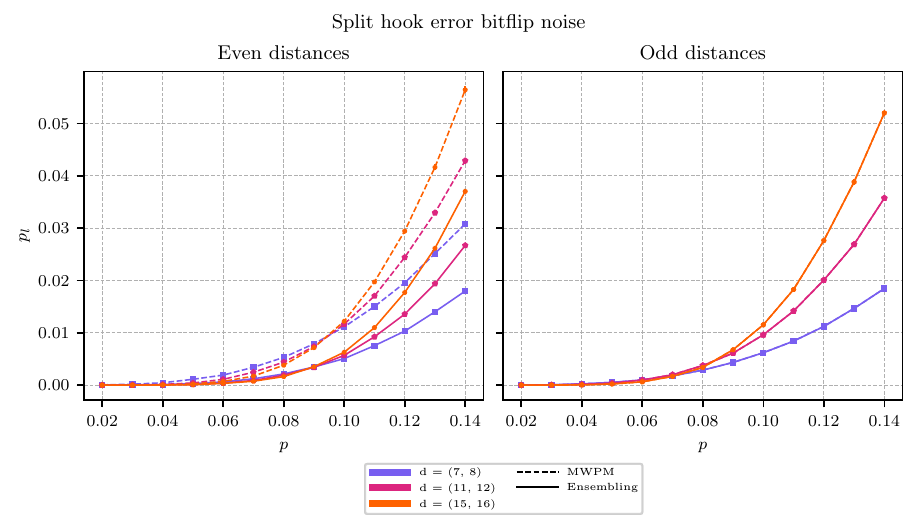}
    \includegraphics{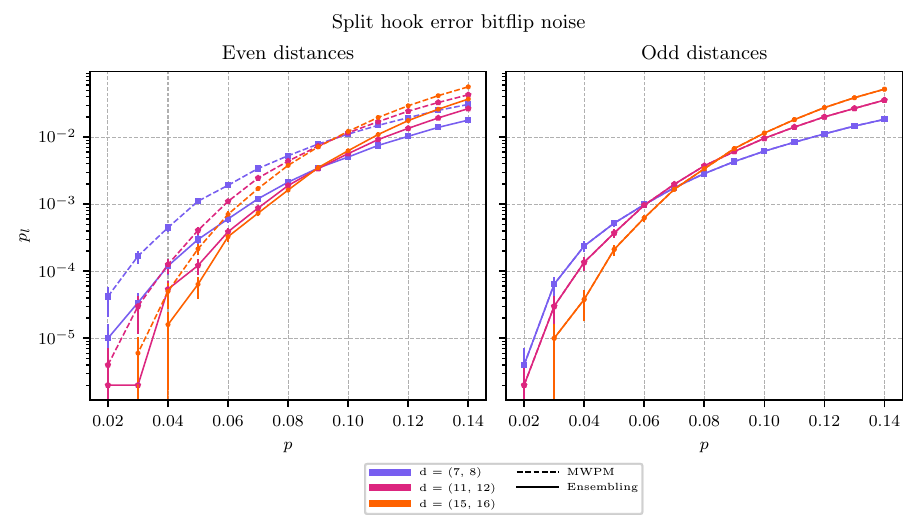}
    \caption{\edit{Top: Logical performance of the surface code under split bitflip hook errors on $X$-type stabilizers for even (left) and odd (right) code distances. Bottom: The same logical performance in log-log-scale, allowing the low-$p$ behavior to be seen. The ensembling uses 50 random perturbations on the decoding graph and improves the performance of MWPM significantly for even code distances, while leaving the performance for odd code distances unchanged.}}
    \label{fig:hook_error_plot}
\end{figure}

Second, we consider the rotated surface code under a combination of bitflip and readout noise, since the 3D setting allows for more same-length paths between endpoints of error chains than the 2D setting. Here, however, we see only a modest performance difference between MWPM decoding and weakly ensembled MWPM decoding. The results are shown in Appendix~\ref{app:further_plots}.}

\FloatBarrier

\section{Strong ensembling, and performance estimates in the toric code under non-uniform bitflip noise}%
\label{StrongEnsembling}

In the toric code under uniform bitflip noise, we have seen in  Section~\ref{FKT_results_section}
that dMP primarily provides benefits over MP for small, even code distances. The performance
difference between dMP and ML decoding shows that there is room for further decoder improvements,
beyond degeneracy enhancement.  To deepen the analysis of decoder optimization potential, we extend
our investigation to a non-uniform noise model.  In this model, each qubit bitflip error rate,
$p_{i}$, is independently drawn from a normal distribution with mean $p$ and standard deviation
$\sigma_p$. The error ratio samples are subsequently truncated to lie within the interval $[10^{-4},
\frac{1}{2}]$ which ensures physical plausibility.  This noise model lifts the ground state
degeneracies and is thus expected to reduce dMP decoding to MP decoding already for a small non-zero
standard deviation.

In the following subsections, we show the results of simulations for MP, dMP, and ML decoders under
the truncated Gaussian noise model for varying standard deviations $\sigma_p \in \{0.005, 0.06,
0.12\}$ with $10^{4}$ samples of error configurations with Gaussian mean physical error rates
$p\in[0.06, 0.14]$.  The energy gap between ground and excited states may be arbitrarily small \edit{for}
non-uniform couplings introduced in Section~\ref{StatMechMapping}. To suppress non groundstate
contributions to performance estimates of MP and dMP decoding sufficiently, we are approximating the
zero temperature limit by reduced temperature $T=0.01T_{\text{Nish}}$ and increased $9999$ bits of
precision within the FKT algorithm for non-uniform bitflip noise.

\subsection{The effect of non-uniformity on the difference between MP and ML performance}
Fig.~\ref{fig:fktNonUniform} presents the performance estimates of MP, dMP, and ML decoders under
the truncated Gaussian noise model.  We observe that increasing the standard deviation generally
improves overall code performance. This is shown by lower absolute logical failure rates and higher
error thresholds at larger standard deviations of the noise model with respect to the same decoding
strategy. This is expected as the decoding process for less uniform systems generally becomes easier
when the overall error expectancy remains the same.  For odd code distances, the relationship
between decoding strategies remains largely unaffected by increased standard deviation: dMP decoding
performs only as good as MP decoding for all investigated physical error rates, while ML decoding
consistently outperforms dMP and MP decoding for larger error rates and code distances $d>3$. Hence,
weak ensembling methods are expected to produce no notable performance gain over MP decoding for odd
code distances under non-uniform bitflip noise. Further, a clear performance gap to ML decoding
persists across all tested standard deviations and code distances $d>3$.  In contrast, for even code
distances, we observe that dMP decoding performs only as good as MP decoding already for small non
uniformity in the noise. This indicates no performance enhancements by weak ensembling over standard
MWPM for slightly non-uniform bitflip noise, as expected from the lifting of the ground state
degeneracies. As a consequence, the ML decoder now outperforms dMP decoding at higher physical error
rates already at small code distances, in contrast to the uniform case. It is important to note that
for small enough physical error rates and high enough standard deviation all decoding strategies
become indistinguishable, while the performance enhancement opportunities between MP/dMP toward ML
decoding remain for high physical error rates within the tested range of standard deviations. The
performance difference between ML and dMP/MP decoding decreases slightly with an increase of
standard deviation.

\begin{figure}
    \centering
    \includegraphics{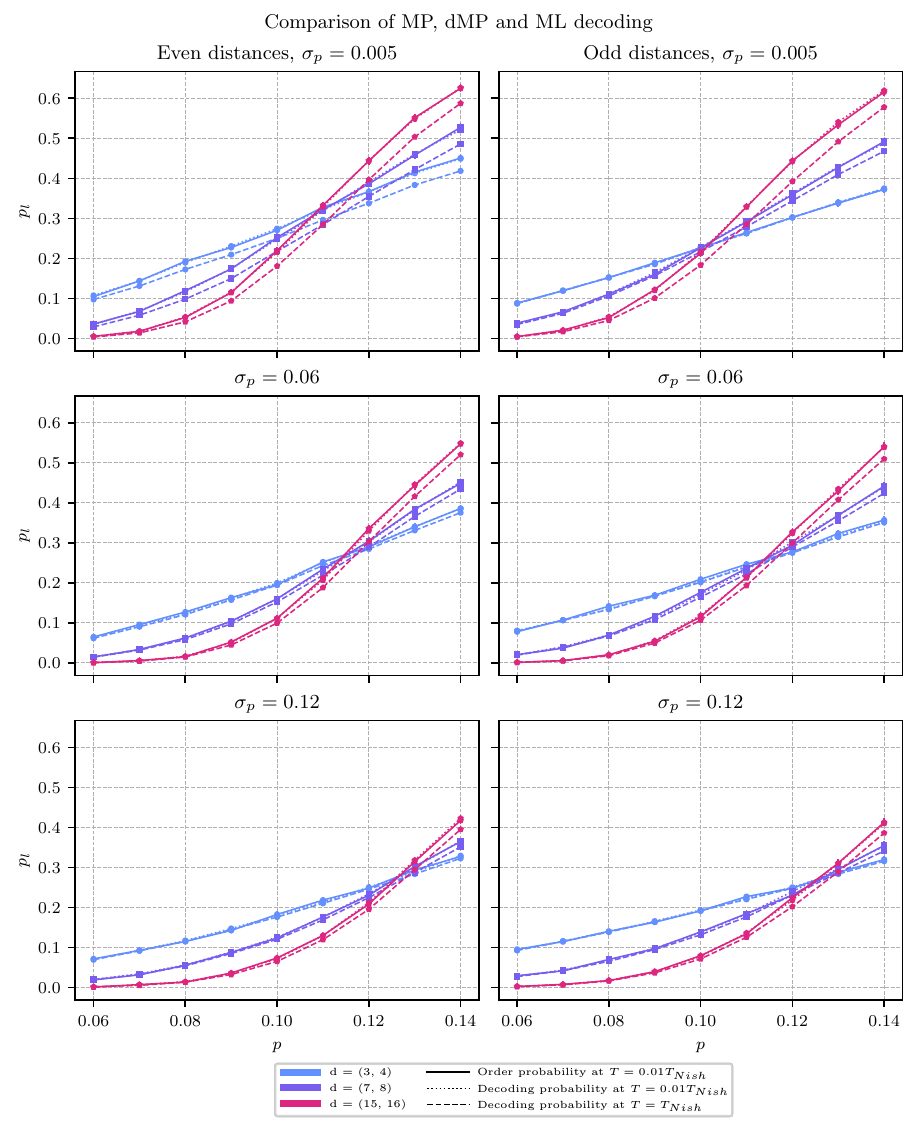}
    \caption{Comparison of MP, dMP, and ML decoding for varying standard deviations of the cut-off
    Gaussian bitflip noise model. We observe general performance improvement with an increase of
    standard deviation. We also see that the performance difference between dMP and ML decoding persists
    within investigated parameter range.}
    \label{fig:fktNonUniform}
\end{figure}

\subsection{Strong ensembling for non-uniform noise}%
\label{ssec:nonuniform_mwpm}

\begin{figure}
    \centering
    \includegraphics{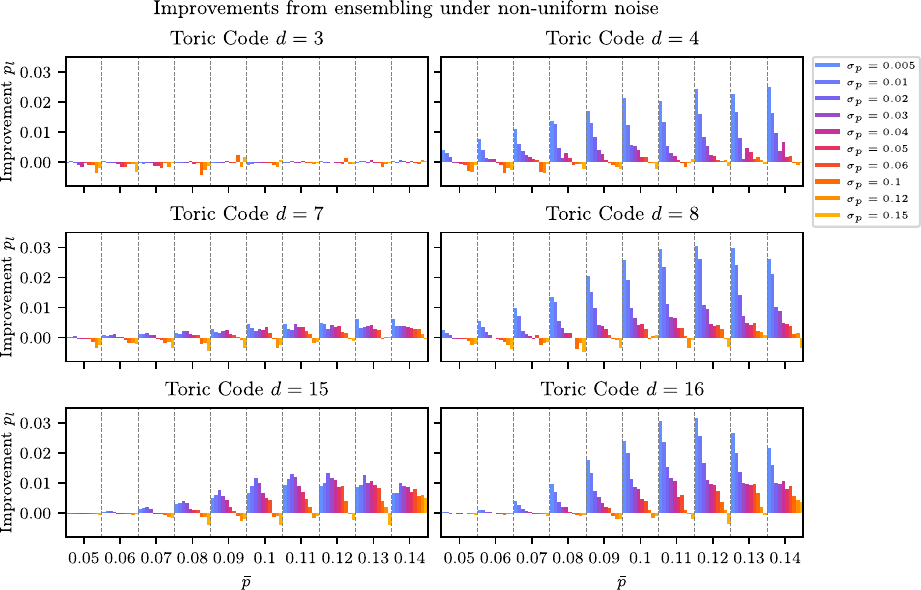}
    \caption{Improvement of the logical error rate under ensembling with $\sigma_\Xi=0.025$ for a
        non-uniform noise model. For odd distances we see no or only very small improvements for
        $d=15$, whereas for small standard deviation $\sigma_p$ of the physical error rate and even
        distances, improvements up to $0.03$ in $p_l$ can be found. This effect decreases with
        larger standard deviations and vanishes for $\sigma_p=0.15$. The improvements are more
        pronounced for physical error rates around the threshold, especially at larger distance.}
    \label{fig:mwpm_pert_nonuniform}
\end{figure}

We evaluated ensembling for the MWPM decoder for the toric code under the same truncated Gaussian
noise model. Similar to the discussion in Section~\ref{section:modifiedMatchingBasedDecoder} we
sample 50 perturbed matching graphs with modified edge weights:
\begin{equation}
    w_i = \log\frac{1-(p_i+\Xi_i)}{p_i+\Xi_i},
\end{equation}
where we sample $\Xi_i\sim\mathcal{N}(0, \sigma_\Xi)$. We again subjected the perturbation standard
deviation $\sigma_\Xi$ to optimization outlined in Appendix~\ref{app:opt_ensembling} and found
$\sigma_\Xi=0.025$ as an approximate optimum for the configurations considered.
Fig.~\ref{fig:mwpm_pert_nonuniform} shows the improvement in the logical error rate $p_l$ for a
range of mean physical error rates, standard deviations for the non-uniform noise model, and
distances $d=3,4,7,8,15,16$ for the toric code. For odd distances, we find no or very small
improvements only, which aligns with the findings for a uniform noise model and the previous
discussions about degeneracies in the toric code.  For even distances, we see improvements in the
logical error rate up to about $0.03$, particularly for mean physical error rates around the
threshold. The effect is most pronounced for small $\sigma_p$ and vanishes for truncated Gaussian
noise with a standard deviation for the physical error rate of $\sigma_p=0.15$.

\edit{
\section{The color code under bitflip noise and depolarizing noise}\label{ColorCode}
As a first exploration into applying the partition function framework to other codes, we consider the color code on a hexagonal lattice under bitflip and depolarizing noise on the data qubits. In these settings, the FKT algorithm cannot be used. Given the long runtimes needed for Wang-Landau sampling in the toric code under bitflip noise, we turn instead to tensor network methods for partition function computations. Restricting to the Nishimori line, we employ the publicly available tensor network contraction package \texttt{SweepContractor.jl}~\cite{chubb2021e}.  This algorithm has shown reliable results for approximate maximum likelihood decoding of the color code under depolarizing noise.

The tensor network subject to the contraction constitutes of marginal error probability tensors associated with each logical qubit, connected to delta tensors for each stabilizer generator~\cite{chubb2021c}. Errors are sampled according to the bitflip or depolarizing channel for varying physical error probabilities, inducing index flips in the marginal distribution tensors of the respective logical qubits. The contraction of the network sums over all error configuration within the same logical class, up to tensor truncations. The contraction acts directly on the code structure, making temperatures other than the Nishimori temperature (and thus the investigation of MP and dMP decoding) inaccessible by this method.

We observed that for given values of $p$ and $d$, an insufficient maximum bond dimension $\chi_{\rm{max}}$ can introduce sign errors through truncation of the tensor network, leading to unphysical (negative) estimates of the logical class probabilities. 
To ensure sufficient precision while keeping the total runtime low, 
we adapted $\chi_{\rm{max}}$ to each combination of $p$ and $d$, finding that the required values varied between $\chi_{\rm{max}} = 32$ and $\chi_{\rm{max}}=128$ for the $p$ and $d$ considered, with the larger values within this range required under depolarizing noise at low $p$ and large $d$.\footnote{\edit{Methods like gauge fixing can also be applied to stabilize signs within the truncation~\cite{bro-sign-flip}, which could potentially lower the requirements on $\chi_{\rm{max}}$. We found that a minimal version of this method, which deterministically assigns the orientation of only the leading singular vector in the singular value decomposition,  was insufficient to change the required $\chi_{\rm{max}}$ in our case, and did not attempt more elaborate versions of the scheme.}}
}

\edit{Using the partition functions computed through tensor network contraction, the maximum partition function decoder is constructed following Def.~\ref{maximum_Z_decoder_definition}, and the decoding probability is evaluated in accordance with Def.~\ref{decoding_probability_definition}, Eq.~\eqref{decoding_probability_definition_eq}. 
In Fig.~\ref{fig:color_code_logicals} the optimal logical error curves are shown for the color code under bitflip and depolarizing noise, computed both using decoding probability (ratio method) and by counting the number of failures of the maximum partition function decoder.

As in the case of the toric code under bitflip noise, we expect the ratio method to require fewer samples in order to achieve a given target confidence interval $\Delta$. In Fig.~\ref{fig:color_code_CI_widths}, examples are shown of the confidence intervals achieved with the two methods. 
The scale factor $r(p,d)$ between the width of the confidence intervals obtained by the counting method and the ratio method again depends on $p$ and $d$, as shown in Fig.~\ref{fig:color_code_CI_ratios}. Example plots demonstrating the independence of $r(p,d)$ from the total sample size are shown in Appendix~\ref{app:CIratio}. We recall from Section~\ref{FKT_results_section} that the relative required sample size to reach a target $\Delta$ is given by $R \to 1/r(p,d)^2$ in the large sample limit. 
While less dramatic efficiency gains are observed under depolarizing noise, where the increase in the required $\chi_{\rm{max}}$ limits the accessible range of $p$, the results point towards an increase in efficiency gain at low $p$ under both bitflip and depolarizing noise.}

\begin{figure}
    \centering
    \text{\scriptsize Performance estimates for the color code at $T=T_{Nish}$}
    \vfill
    \begin{subfigure}{0.49\linewidth}
        \includegraphics[width=\linewidth]{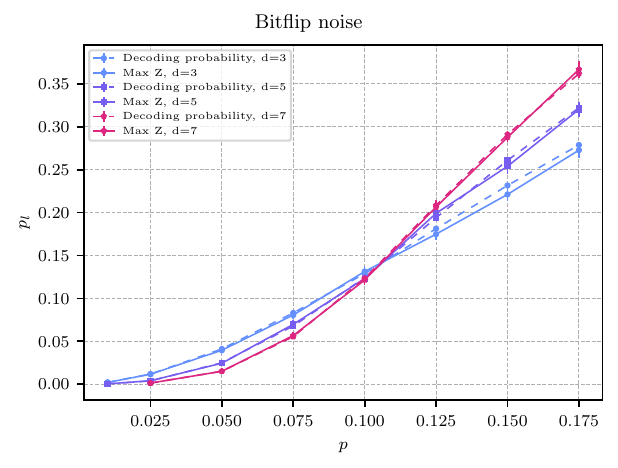}%
    \end{subfigure}
    \begin{subfigure}{0.49\linewidth}
        \includegraphics[width=\linewidth]{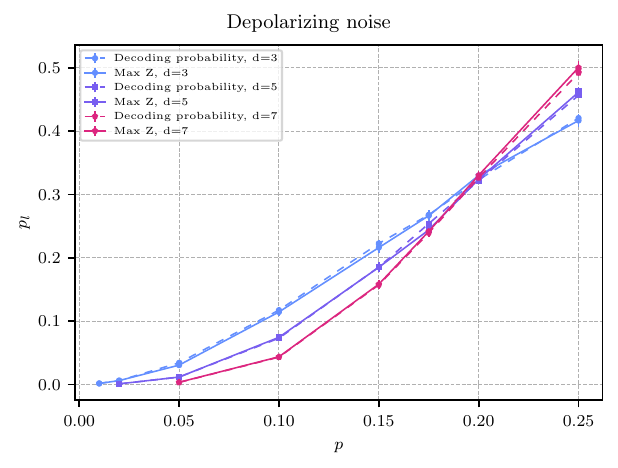}
    \end{subfigure}
    \caption{
    \edit{Estimated optimal performance for the color code under bitflip noise (left) and depolarizing noise (right). The performance estimates obtained from both the counting method (``Max Z'') and ratio method are shown and yield compatible results. The corresponding threshold estimates are consistent with expected values of $p_{th}\sim 10.9\%$ for bitflip noise \cite{PhysRevLett.103.090501} and $p_{th}\sim 18.9\%$ for depolarizing noise \cite{bombin2012a}.}}  
    \label{fig:color_code_logicals}
\end{figure}

\begin{figure}
    \centering
    \text{\scriptsize Confidence interval width $\Delta$ for the color code at $T=T_{Nish}$}
    \vfill
    \begin{subfigure}{0.49\linewidth}    
        \includegraphics[width=\linewidth]{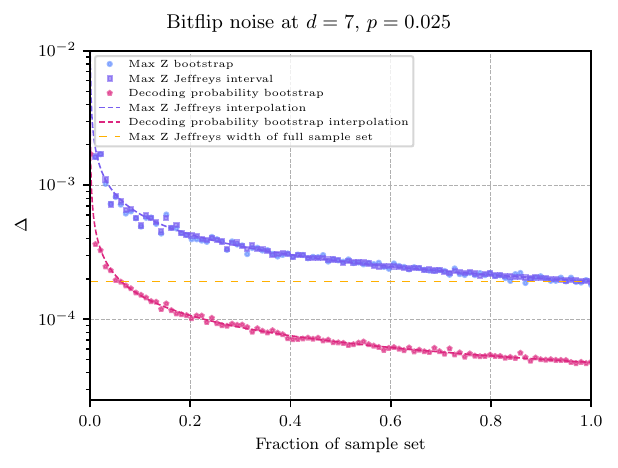}
    \end{subfigure}
    \begin{subfigure}{0.49\linewidth}
        \includegraphics[width=\linewidth]{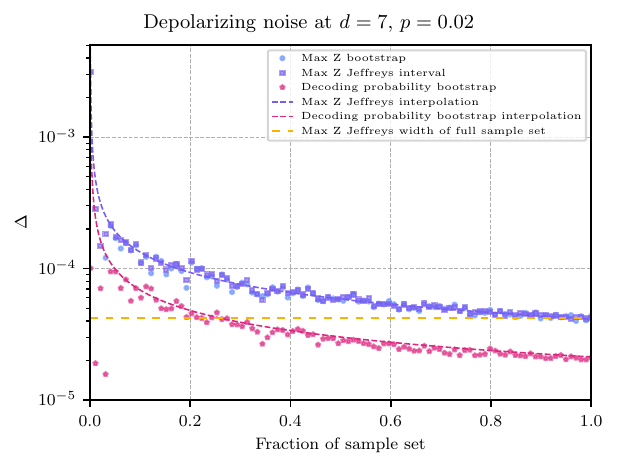}
    \end{subfigure}
    \caption{\edit{Confidence interval widths for estimates of the optimal logical error rates of the color code under bitflip noise at $d=7, p=0.025$ (left) and under depolarizing noise at $d=7,p=0.02$ (right), for varying fractions of the sample sets. The confidence interval widths are displayed for both the decoding probability (``ratio method'') and the maximum partition function counting method. 
    The counting method confidence interval widths for the full sample set, generated by the Jeffreys interval, are shown as horizontal lines.
    Their intersection points with the curves corresponding to the ratio method indicates that under bitflip noise at $d=7,p=0.025$, $6.3\%$ of the sample set is sufficient for estimating the optimal code performance to the same accuracy with the ratio method, while the corresponding number under depolarizing noise at $d=7,p=0.02$ is $26\%$. The full sample sizes are $5\times 10^5$ and $8\times 10^5$, respectively.}}
    \label{fig:color_code_CI_widths}
\end{figure}

\begin{figure}
    \centering
    \text{\scriptsize $r(p,d)$ for the color code at $T=T_{Nish}$}
    \vfill
    \begin{subfigure}{0.49\textwidth}
        \includegraphics[width=\linewidth]{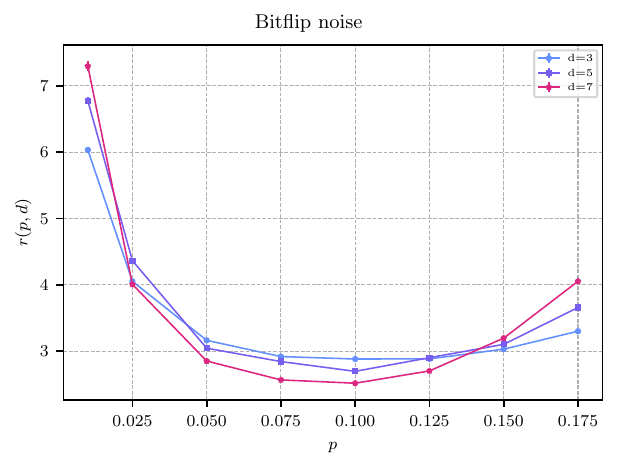}%
    \end{subfigure}
    \hfill
    \begin{subfigure}{0.49\textwidth}
        \includegraphics[width=\linewidth]{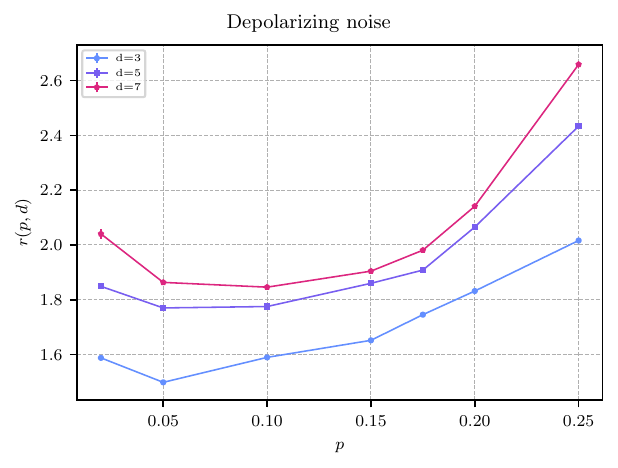}
    \end{subfigure}
    \caption{\edit{Scale factor $r(p,d)$ between the confidence interval widths of the counting and ratio methods for estimating optimal logical error rates of the color code, for different $p,d$. Examples of scale factor estimation for the color code are shown in  Fig.~\ref{fig:CIratio_cc}. The results are shown for the color code under bitflip noise (left) and depolarizing noise (right). Under both noise models, the ratio method exhibits an increasing advantage in the low-$p$ regime.}}
    \label{fig:color_code_CI_ratios}
\end{figure}

\section{Discussion}\label{Discussion}

In this paper, we have presented a framework for estimating logical performance in stabilizer codes 
under different decoding approaches.
The framework relies on estimating partition functions in the corresponding statistical mechanics models.
While the numerical results \edit{mainly focus on} the toric code under bitflip noise, where the partition 
functions can be computed with Pfaffian
methods, the method carries over to more general settings, \edit{such as more general qLDPC codes\footnote{\edit{See e.g. Ref.~\cite{Kovalev:2014vhn}, which leverages the statistical mechanics mapping to explore the decoding transition in non-local, finite-rate qLDPC codes.} } and circuit level noise, where partition functions can be estimated through other means,} using e.g., tensor network methods. 
\edit{We showcase such an application of our framework on top of tensor network methods by considering the color code under bitflip and depolarizing noise, although the distances considered are more limited.

The feasibility of extending to more general codes and noise models depends on the feasibility of computing or approximating partition functions -- or, if restricting to the Nishimori line -- logical error class probabilities. Tensor network methods have shown great promise for 2D problems, while (so far) being far more limited in 3D: in \cite{chubb2021e}, distances up to 96 are reached for the surface code under bitflip noise, while in \cite{piveteau2024}, where circuit level noise is considered for the surface code, the distances are far more modest (up to d=7). Outside of tensor network methods, the Wang-Landau method can also be applied to general settings. 
It is possible that there are regimes in which Wang-Landau could be preferable; we have not seen any systematic findings in the literature on which regimes would be the most interesting, but as far as we can tell it is not known to depend strongly on dimensionality and structure in the same way tensor network methods do, and it would be of interest to explore whether Wang-Landau may outperform tensor network methods for qLDPC codes that lack these features.

Compared to tensor network methods and Wang-Landau, the FKT algorithm used in this work is less likely to be useful for any realistic noise settings. However, in \cite{tobias2026optimaldecodingworm}, a sampling-based algorithm known as the ``worm'' makes it possible to efficiently approximate logical error class probabilities (and, more generally, partition functions at different temperatures) for a slightly broader class of problems: those with a ``matchable'' structure. While still restrictive, this broader class includes the surface code under both bitflip and readout noise. The authors demonstrate how this method can be combined with correlated decoding to improve the surface code depolarizing threshold over correlated matching, exemplifying how methods that can handle only part of the decoding problem optimally may still improve the overall performance.} \edit{A different matching-based method for estimating logical error class probabilities, by finding the first $K$ minimum weight matchings from the decoding graph, is given in \cite{MaoApproximateML}.}

\edit{The partition function framework that we define} 
allows for apples-to-apples comparisons of the logical error curves of different codes, avoiding
confounding factors from code-specific decoder implementations. One example of a confounding factor in the comparison of codes and decoders is bias within matching
implementations. As was pointed out in~\cite{stace2010}, this bias can also lead to
inaccurate estimates of the location of the phase boundary of the corresponding statistical
mechanics model. In Fig.~\ref{fig:mwpm_uniform_pert}, ensembling is seen to outperform MWPM for all
distances, while in Fig.~\ref{fig:lowT_FKT} MP and dMP are seen to agree in the toric code at large
distance. This shows that the improvement in Fig.~\ref{fig:mwpm_uniform_pert} is not due to an
inherent difference between MP and dMP at large distance, but rather due to ensembling suffering
less from bias in the \texttt{PyMatching} implementation (as is also seen in
Fig.~\ref{fig:mwpmvsmp}). 
\edit{Furthermore, the framework is sample-efficient compared to conventional sampling of decoder failures. The sample efficiency depends on $p$ and $d$, and the results indicate that estimation through ratios of partition functions may be particularly 
useful at low $p$ and large $d$, meaning at low logical error rates, where conventional sampling struggles. Probing the regime of low logical error rates is important, as this is the regime relevant for large-scale quantum computing. Other recent work for estimating or bounding the behavior in this regime includes Ref.~\cite{beverland2025failfasttechniquesprobe}.}

The bitflip noise model is far from realistic, and non-uniform noise is furthermore unlikely to be
normally distributed. Future work would be needed to establish which of the qualitative features
seen above would generalize to more realistic settings. For instance, we see that the performance
improves for non-uniform noise when the decoder is given the individual qubit fidelities. For
correlated noise such as circuit noise there could be an overall performance reduction compared to
uniform noise. Similarly, qualitative observations concerning the performance gains from ensembling
(that they are most notable around the threshold, and that they decrease for strongly non-uniform
noise) may change in the presence of correlations.

The role of degeneracy would also merit further work.  For bitflip noise, it was demonstrated
in~\cite{beverland2019} how the increased amount of degeneracy in the rotated surface code can lead
to worse logical performance than that of the unrotated surface code in certain regimes, despite
having a distance that is larger by a factor of $\sqrt{2}$ for the same number of physical qubits.
In~\cite{stace2010} the parity of the distance was seen to clearly influence the threshold
estimates when matching degeneracy is taken into account, in the setting of the toric code under
bitflip noise.  These observations of the effects of boundaries and parity agree with those of the
present work. Future work would be needed to quantify the effect of degeneracy and its dependence on
boundary conditions and parity in more general settings. In stabilizer codes or noise models with
stronger effects from degeneracy, methods such as ensembling could lead to larger gains.
\edit{We see an illustrative example of this in the case of the rotated surface code under a split hook error noise model.}

The numerical simulations in the present work have focused on $T=T_{\text{Nish}}$ and the zero temperature
limit, as these are the temperatures relevant for ML, MP and dMP decoding. Here, we find that the
threshold estimates for maximum partition function decoding and probabilistic partition function
decoding agree. This is consistent with a maximum partition function decodability boundary that is
the same as the phase boundary (although the possibility remains that they may disagree at other
temperatures). We leave as an open question whether there are statistical mechanics models where the
maximum partition function decodability boundary differs from the phase boundary, or whether it is
possible to prove that these boundaries must agree.

\newpage

\section{Author contributions}

LW performed the FKT simulations \edit{and tensor network simulations,} and HH performed the \texttt{PyMatching} simulations. LW and HH
also analyzed the results. LGS conceived the project and developed the theory. EM developed the
initial CUDA code then LW took over. EM and LGS guided the direction of the project, and supervised
the numerical simulations. All authors contributed to the discussion of the results and the writing
of the manuscript.

\section{Acknowledgments}

LGS is supported through a Leverhulme-Peierls Fellowship at the University of Oxford, funded by
grant no. LIP-2020-014. LW, HH and EM acknowledge funding by the German Ministry of Economic
Affairs and Climate Action (BMWK) and the German Aerospace Center (DLR) in project QuDA-KI under
grant no. 50RA2206.

We thank Matthew Steinberg for related discussions and collaboration, Marius Beuerle for initial
collaboration, Benedikt Placke for discussion, and Bela Bauer and Christina Knapp for discussions
about non-uniform noise. We than Steven Simon and Matthew Steinberg for feedback on a draft version of the manuscript.

\appendix

\section{Zero Temperature limit}%
\label{ZeroTLimit}

In order to determine a temperature scale sufficient to estimate decoder performance in the zero
temperature limit with the FKT algorithm, we performed simulations with varying temperature and bits
of precision parameters.  The performance estimates for a selection of these simulations are shown
in Fig.~\ref{fig:fkt0Tlimit} and Fig.~\ref{fig:wl0Tlimit}. Fig.~\ref{fig:fkt0Tlimit} shows a
comparison of FKT performance estimates of maximum partition function decoding at $9192$ bits of
precision for $T=0.01T_{\text{Nish}}$ and $4096$ bits of precision for $T=0.1T_{\text{Nish}}$. The performance
estimates are matching within error bars. Additionally, Fig.~\ref{fig:wl0Tlimit} shows the WL
performance estimates of maximum partition function decoding at $T=0.1T_{\text{Nish}}$ and WL performance
estimates of maximum partition function decoding at $T=0$. The latter reduces to maximizing
$g({E_{\min}(\vec{s})})$, a quantity that WL gives direct access to. The figure shows very close
matching of performance estimates. Fig.~\ref{fig:0TlimitCombined} depicts a further comparison
between WL performance estimates of maximum partition function decoding at $T=0$ and FKT results of
maximum partition function decoding at $T=0.1T_{\text{Nish}}$ with $4096$ bits of precision. As the
performance estimates are matching within error bars, we determine $T=0.1T_{\text{Nish}}$ and $4096$ bits
of precision to be sufficient parameters to estimate the low temperature performance with the FKT
algorithm.

\begin{figure}
    \centering
    \includegraphics{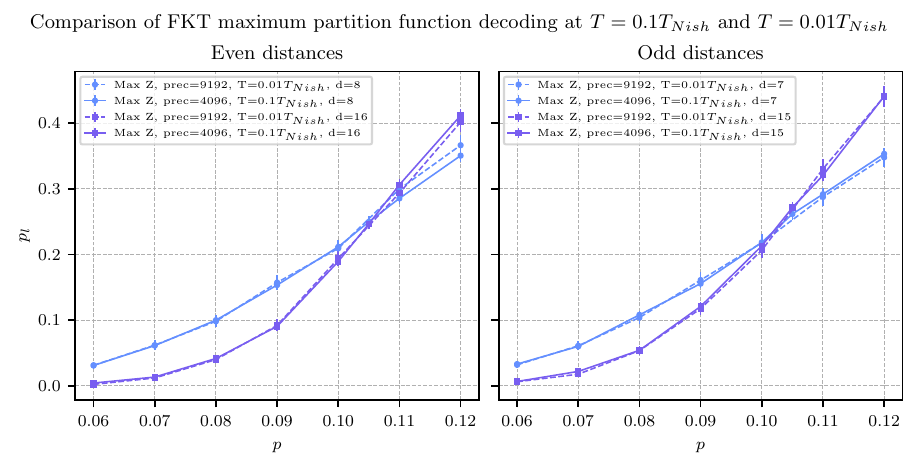}
    \caption{Comparison of FKT performance estimates of maximum partition function decoding at
    $T=0.1T_{\text{Nish}}$ and $T=0.01T_{\text{Nish}}$ for respective number of bits of precision $4096$ and
    $9192$.}
    \label{fig:fkt0Tlimit}
\end{figure}

\begin{figure}
    \centering
    \includegraphics{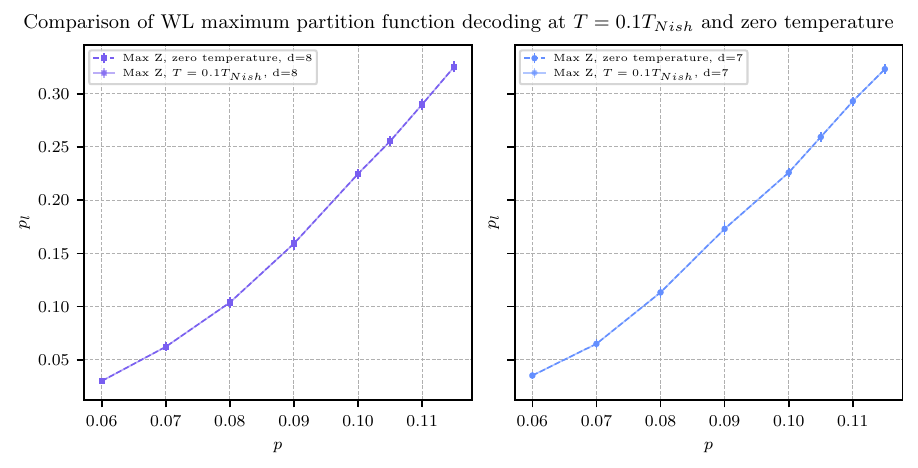}
    \caption{Comparison of WL performance estimates of maximum partition function decoding at
    $T=0.1T_{\text{Nishimori}}$ and $T=0$.}
    \label{fig:wl0Tlimit}
\end{figure}

\section{FKT algorithm}%
\label{FKT_appendix}

The toric code under bitflip noise maps under the statistical mechanics mapping to a two-dimensional
Random Bond Ising Model, as shown in Eq.~\eqref{Hamiltonian_RBIM}. The Ising lattice defines a graph
$G=(V,E)$ by associating Ising spins with vertices and interactions between spins with edges.  The
dual-lattice $G^{\star}$ forms an $m\times n$ square lattice with spin degrees of freedom associated
with faces, interactions transmitted via edges between neighboring faces, and periodic boundary
conditions. A spin configuration on the Ising lattice is described by a face value assignment $\{
S_{i} | S_{i}\in {\pm 1} \land 0\leq i<m\times n \}$ on $G^{\star}$. The correspondence between spin
configurations on $G^{\star}$ and relative domain walls with additional fixing of a single spin
value is depicted in Fig.~\ref{fig:relative_domain_walls}. Furthermore, by decorating each vertex in
$G^{\star}$ with a Kasteleyn city as depicted in Fig.~\ref{fig:dimer_cover}, one realizes a map
between perfect matchings in the decorated graph $\tilde{G}^{\star}$ and spin configurations of the
Ising model. We note that this map is not one-to-one. We endow $\tilde{G}^{\star}$ with an
orientation $D$ and according edge weights by $\omega((i,j))=0$ for $i,j\in \text{vertices of same
Kasteleyn city}$ and $\omega((i,j))=J_{ij}$ for $(i,j) \in \text{D and i, j from different cities}$.
We define the Kasteleyn matrix $K=(K_{ij})_{i,j \in V}$ as the adjacency matrix of
$\tilde{G}^{\star}$, which carries information on connectivity, weights and the orientation of the
graph as follows: $K_{ij}=e^{-2\beta\omega((i,j))}$ if the edge $(i,j)$ is an oriented edge in $D$,
$K_{ij}=-e^{-2\beta\omega((j,i))}$ if $(j, i)$ is an oriented edge in $D$, and $K_{ij}=0$ otherwise.
The skew-symmetric weighted adjacency matrix $K=(K_{ij})_{i,j}$ has dimension $4nm\times4nm$. The
Pfaffian of a $2N \times 2N$ skew-symmetric matrix $K$ is defined by:
\begin{equation}
    \label{eq:pfaffian}
    Pf(K)=\frac{1}{n!2^{n}}\sum_{\pi\in S_{2n}}\sigma(\pi)\prod_{j=1}^{n}K_{\pi(2j-1), \pi(2j)}
\end{equation}
with $\sigma(\pi)$ the sign of the permutation. All non vanishing terms in the Pfaffian correspond
to products of edge weights of a perfect matching on the related graph with signs depending on the
orientation.  For all planar graphs there exists an orientation called Pfaffian orientation which
ensures that all perfect matchings contribute with the same sign to the Pfaffian. One can show for
planar graphs that $Z=Pf(K_{\text{Pf}})$ with the Kasteleyn matrix $K_{\text{Pf}}$ corresponding to
a Pfaffian orientation of $\tilde{G}^{\star}$.  Notice that in the non planar case not all perfect
matchings on $\tilde{G}^{\star}$ are associated with physical spin configurations; domain walls
which form non trivial loops around the torus must come in pairs while dimer configurations can come
in four distinct ways: $(o, o)$, $(o,e)$, $(e,o)$, $(e,e)$. The tuples denote even or odd wrapping
number along the two dimensions of the torus. Only the $(e,e)$ component is related to physical spin
configurations.  To relate the partition function of an Ising model with periodic boundary
conditions to the calculation of Pfaffians one chooses four different orientations on
$\tilde{G}^{\star}$. The four orientations differ only on edges at the boundary (the edges which
wrap around the surface) and are equal to a specific Pfaffian orientation on the bulk. Specifically,
one assigns uniformly for the first row and column either $K_{ij}=e^{-2\beta\omega((j,i))}$ or
$K_{ij}=-e^{-2\beta\omega((j,i))}$. Thus, the boundary orientation defines four different Kasteleyn
matrices which are labelled by the chosen signs $K^{++}$, $K^{+-}$, $K^{-+}$ and $K^{--}$. The
Pfaffian of each of these Kasteleyn matrices contains summands which correspond to non trivial
domain wall loops which can not be related to spin configurations on the Ising model. These summands
cancel if summed over all boundary orientations. Only the $(e,e)$ components participate which leads
to: $2Z=Pf(K^{++})+Pf(K^{+-})+Pf(K^{-+})+Pf(K^{--})$. 

\begin{figure}
    \centering
    \includegraphics[width=0.4\linewidth]{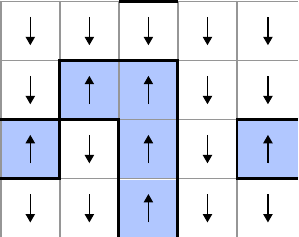}
    \caption{Relative domain walls of a spin configuration on $G^{\star}$. To describe a spin
    configuration uniquely we have to define relative domain walls and additionally the spin value
    on a single face.}
    \label{fig:relative_domain_walls}
\end{figure}
\begin{figure}
    \centering
    \includegraphics[width=0.5\linewidth]{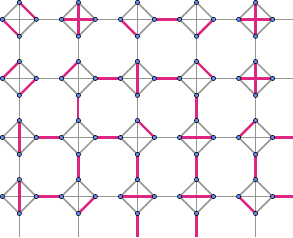}
    \caption{Mapping of domain walls to a dimer cover of $\tilde{G}^{\star}$ with Kasteleyn city
    decoration on original vertices in $G^{\star}$. Observe that a dimer configuration is unique at
    cities with inter city connections determined by the respective domain wall. Vertices not
    participating in a domain wall can be covered in three different ways by dimers. These three
    possibilities are shown on the three cities on the top left corner. Thus, the map is not one to one
    between domain walls and dimer coverings. Furthermore, the map is invariant under global spin flip
    and thus not injective.}
    \label{fig:dimer_cover}
\end{figure}

\section{Replica-exchange Wang-Landau algorithm}%
\label{WL_appendix}

The replica-exchange Wang-Landau algorithm~\cite{vogel_generic_2013, vogel_practical_2018}, is a
Monte Carlo algorithm to estimate the density of states $g(E_{i})$ over an energy spectrum divided
into bins $E_{i}$ of a classical spin system described by the Hamiltonian $H$. We summarize the
algorithm in Alg.~\ref{alg:wl}.

The algorithm relies on the fact that a random walk over the energy spectrum $S=\{E_{i}\}$ with
probability proportional to $\frac{1}{g(E_{i})}$ will produce flat histograms with support on the
energy spectrum of $H$. Access to the density of states enables the calculation of the partition
function for arbitrary temperatures, including the zero-temperature limit, in contrast to the
finite-temperature restriction of the FKT algorithm.
We start by explaining the general procedure of the Wang-Landau (WL) algorithm and continue by
explaining the chosen replica exchange (RE) parallelization scheme.  The WL algorithm starts by
initializing an array $\log(g(E_{i}))=0~\forall~E_{i}\in S$, which holds the density of states
estimate after completing the algorithm and a histogram $h(E_{i})=0~\forall~E_{i}\in S$. In
addition, one initializes an update factor $f=e$. Subsequently, random spin flips are performed. The
move is accepted with transition probability $p(E_{1}\rightarrow
E_{2})=\min\{\frac{g(E_{1})}{g(E_{2})}, 1\}$, with energy of the spin configuration $E_{1}$ before
the spin flip and $E_{2}$ after the flip. For each update step, a potentially new configuration with
energy $E$ is realized. The histogram and density arrays are updated accordingly: $h(E) \mapsto h(E)
+ 1$ and $\log(g(E)) \mapsto \log(g(E)) + \log(f)$. The random walk continues until the histogram
satisfies the flatness condition determined by a flatness parameter $\alpha$: $\min(h(E_{i})) \geq
\alpha \cdot \text{mean}(h(E_{i}))$. If the histogram is flat under this condition, the update
parameter is modified by $f\mapsto \sqrt{f}$ and the histogram is reset $h(E_{i})=0~\forall~E_{i}\in
S$. The random walk continues as long as the update factor $f$ is not below a threshold determined
by a run parameter $\beta$: $f \geq e^{\beta}$. The parameter $\beta$ controls the precision of the
simulation and thus the total number of random spin flips. The resulting $\log(g(E_{i}))$ contains
relative degeneracy factors that must be rescaled to match physical degrees of freedom. We are using
the total number of spin configurations to rescale the degeneracy factors by $\sum_{E_{i} \in
S}g(E_{i})\overset{!}{=}2^{\text{\#spins}}$. The algorithm requires knowledge on the energy
spectrum, which is estimated by performing random walks over the energy space with the WL transition
probability over a fixed number of steps within our investigation.

\begin{algorithm}
\caption{Wang-Landau Algorithm}
\KwIn{$\{E_i\}$: Energy spectrum, $H$: Hamiltonian, $\alpha$: Flatness condition, $\beta$: 
Minimal update factor, $N$: Number of MC steps per iteration}
\KwOut{Density of states $\log g(E_{i})$}
\SetKwInOut{Input}{Input}
\SetKwInOut{Output}{Output}
\BlankLine
Initialize density estimates $\log g(E_i) = 0~\forall~E_i$ \\
Initialize histogram $h(E_i) = 0~\forall~E_{i}$\\
Initialize update factor $f = e$ \\
Initialize spin system S\\
Calculate energy E of S\\
\While{$f \geq e^{\beta}$}{
    \For{N times}{
        Random spin flip
        Calculate energy after flip $\tilde{E}$\\
        Accept spin flip with transition probability $p=\min\left\{\frac{g(E)}{g(\tilde{E})}, 1\right\}$\\
        Store new E\\
        $h(E) += 1$\\
        $\log(g(E)) += \log(f)$\\
    }
    \If{$\min(h) \geq \alpha \cdot \text{mean}(h)$}{
        Update $f \mathrel{=}{\sqrt{f}}$\\
        Reset $h(E_{i})=0~\forall~E_{i}$
    }
}
\label{alg:wl}
\end{algorithm}

\FloatBarrier
\edit{
\section{Examples of confidence interval ratios}\label{app:CIratio}

Examples of ratios of confidence interval widths for estimation of logical error rates through counting failures of a maximum partition function (``counting method) and through the decoding probability (``ratio method) are shown in Fig.~\ref{fig:CIratio} (toric code under bitflip noise) and Fig.~\ref{fig:CIratio_cc} (color code under bitflip and depolarizing noise). 
The confidence interval widths for the counting method are displayed both using bootstrapping with 1000 resamples and using Jeffreys interval. We find that the confidence interval widths between the counting method and the ratio method differ by a scale factor $r(p,d)$ that is independent of the total number of samples, with sample-dependent fluctuations that vanish with increased sample size. The values of $r(p,d)$ shown in Figs.~\ref{fig:ratio_p_dependency} and \ref{fig:color_code_CI_ratios} are obtained by taking the mean over the values obtained from subsets of the sample that include more than $0.1$ of the full sample size. The error bars on $r(p,d)$ are obtained by the corresponding standard deviation.

The independence of $r(p,d)$ from the sample size is the basis of the ansatz used in the interpolation of the confidence intervals of the ratio method in Figs.~\ref{fig:confidence_intervals} and \ref{fig:color_code_CI_widths}, and to estimate the relative required sample sizes $R$ to reach a given target confidence interval $\Delta$, as discussed in Section~\ref{FKT_results_section}.

\begin{figure}
    \centering
    \includegraphics[width=0.5\linewidth]{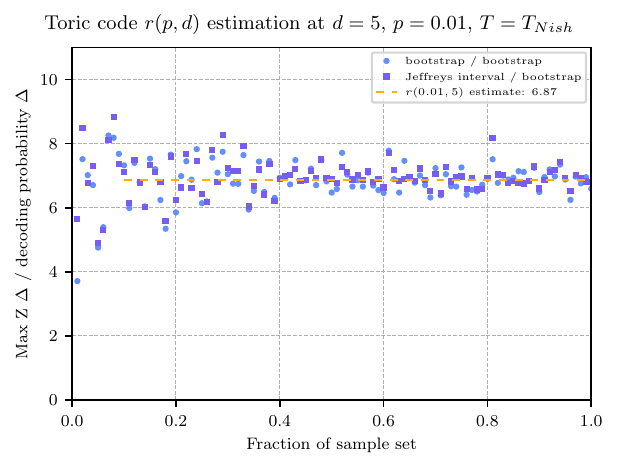}
    \caption{
    \edit{
    Ratios of confidence interval widths $\Delta$ for estimates of optimal logical error rates by counting method and by ratio method, for the toric code under bitflip noise, using subsamples containing different fractions of the full sample size of just below two million physical error configurations.
    The values of $\Delta$ for the counting method are generated using both Jeffreys interval and bootstrapping, while the values of $\Delta$ for the ratio method are generated using bootstrapping only. The labels ``bootstrap/bootstrap'' and ``Jeffreys interval/bootstrap'' indicate what method was used for counting method/ratio method to generate a given data point. 
    We derive an estimate of the scale factor $r(0.01,5)$ by taking the mean of the data points (yellow line). When computing the mean we exclude data points at sample set fractions below $0.1$, since these contain very few logical failures and lead to a large uncertainty in the counting method. 
    Estimates of $r(p,d)$ for varying $p,d$, derived in the same manner, are displayed in Fig.~\ref{fig:ratio_p_dependency}. }}
    \label{fig:CIratio}
\end{figure}

\begin{figure}
    \centering
    \text{\scriptsize Color code $r(p,d)$ estimation at $T=T_{Nish}$}
    \vfill
    \begin{subfigure}{0.49\textwidth}    
        \includegraphics[width=\linewidth]{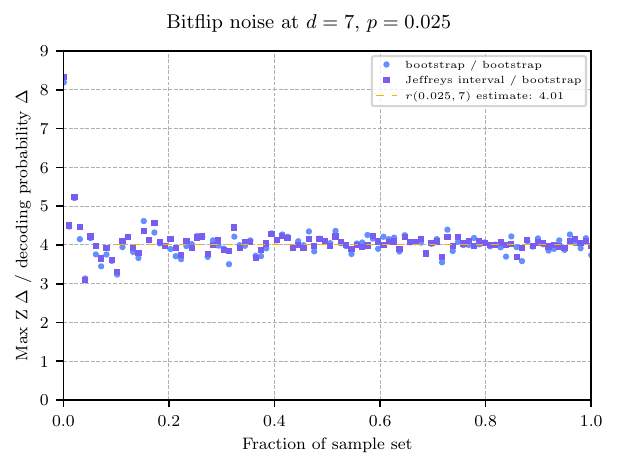}%
    \end{subfigure}
    \hfill
    \begin{subfigure}{0.49\textwidth}      
        \includegraphics[width=\linewidth]{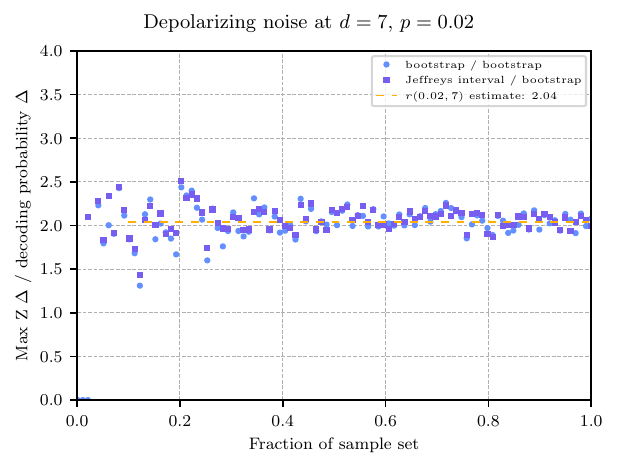}
    \end{subfigure}
    \caption{
    \edit{
        Ratios of confidence interval widths $\Delta$ for estimates of optimal logical error rates by counting method and by ratio method, for the color code under bitflip noise (left) and depolarizing noise (right), using subsamples containing different fractions of the full sample sizes, which are $5\times 10^5$ and $8\times 10^5$, respectively. Same method is used as in Fig.~\ref{fig:CIratio} to estimate the corresponding values of $r(p,d)$.
        Estimates of $r(p,d)$ for varying $p,d$, derived in the same manner, are displayed in Fig.~\ref{fig:color_code_CI_ratios}.
    }
    }
    \label{fig:CIratio_cc}
\end{figure}
}

\FloatBarrier

\section{Minimum-Weight Perfect Matching Decoder}%
\label{app:mwpm_decoder}

A Minimum-weight perfect matching (MWPM) decoder is a maximum probability decoder, that finds a
correction operator $e(\vec{s})$ consistent with a syndrome $\vec{s}$, that is
\begin{equation}
    \label{eq:mwpm_maxprob}
    P(e(\vec{s}))\geq P(e^\prime(\vec{s})), \quad\forall e^\prime(\vec{s}).
\end{equation}
We limit our discussion to the toric code as well as the unrotated and rotated planar surface codes,
though MWPM decoding is applicable to a considerable number of quantum error correction
codes~\cite{higgot2022}. Decoding is generally performed independently for Pauli-$X$ and Pauli-$Z$
errors, by the same means. Given a Pauli-$Z$ error of the form $e\in\{Z,I\}^m$, we denote by
$\vec{e}\in\mathbb{Z}_2^m$ the binary vector with $\vec{e}_i=1$ if an error occurred at qubit $i$
and $\vec{e}_i=0$ otherwise. The error models is fully described by a  three objects.  The detector
check matrix $H\in\mathbb{Z}_2^{n\times m}$ has a row for each detector measurement and a column for
each error mechanism, with $H_{ij}=1$ if detector $i$ is flipped by error mechanism $j$ and
$H_{ij}=0$ otherwise. Each error mechanism $i$ occurs with probability $p_i$, described by
$\vec{p}\in[0,1]^m$. Furthermore, the effect of errors on the logical observables is captured in
$O\in\mathbb{Z}_2^{n_l\times m}$, with $O_{ij}=1$ if the logical observable $i$ is flipped by error
$j$ and $O_{ij}=0$ otherwise. Such a general description of an error model by $H$, $O$ and $\vec{p}$
is readily available from e.g., a stabilizer circuit simulator such as
\texttt{Stim}~\cite{gidney2021} by forward propagating Pauli errors through the circuit.  The steps
of the decoding procedure are shown in Fig.~\ref{fig:mwpm_algo}. The error model induces the
matching graph $ \mathcal{G}_M$ with incidence matrix $H$, where each stabilizer measurement is a
node and each error mechanism an edge. For codes that admit error chains with a single defect, such
as the rotated and unrotated planar surface code for error chains ending at the edge of the lattice,
an additional boundary node is introduced for each such stabilizer measurement, with all boundary
nodes connected by edges of weight $0$. The remaining edges in the graph are weighted
by~\cite{dennis2002b}
\begin{equation}
    w_i = \log\frac{1-p_i}{p_i},
\end{equation}
such that edges corresponding to more probable errors have lower weights.  Every error
$\vec{e}\in\mathbb{Z}_2^n$ produces a syndrome $\vec{s}=H\vec{e}$ which needs to be decoded into a
correction operator $\vec{g}\in\mathbb{Z}_2^m$ with $\vec{g}\equiv e(\vec{s})$ fulfilling
Eq.~\eqref{eq:mwpm_maxprob}. From the defect nodes in $\vec{s}$ and the matching graph
$\mathcal{G}_M$ the syndrome graph $\mathcal{G}_{\vec{s}}$ shown in Fig.~\ref{sfig:mwpm_sgraph} is
constructed. It is a complete graph of all the defect nodes and the shortest paths between them in
$\mathcal{G}_M$ as edges. A matching in $\mathcal{G}_{\vec{s}}$ is a set of edges in
$\mathcal{G}_{\vec{s}}$, such that no two edges are incident to the same node. A perfect matching is
a matching containing all nodes of the syndrome graph, it also is a minimum-weight perfect matching,
if it fulfills these conditions and the sum of the edge weights contained in the matching is
minimal. Any minumum-weight perfect matching in $\mathcal{G}_{\vec{s}}$ corresponds to a correction
operator $\vec{g}\in\mathbb{Z}_2^m$ that is consistent with the syndrome and furthermore fulfills
Eq.~\eqref{eq:mwpm_maxprob}. The decoding is successful, if $\vec{g}$ and the (unknown) error
$\vec{e}$ that caused syndrome $\vec{s}$ have the same effect on $O$, that is
\begin{equation}
    O(\vec{e}\oplus\vec{g}) = 0.
\end{equation}
Otherwise a logical error occurs. For the efficient computation of a minimum-weight perfect
matching, Edmonds' blossom algorithm~\cite{edmonds1965} is the common method, of which various
high-quality implementations and variants exist~\cite{kolmogorov2009, dezsHo2011, higgot2022,
wu2023, higgott2025}. Here we conduct all simulations involving minimum-weight perfect matching
decoding with the \texttt{PyMatching 2} software package, implementing the sparse blossom
algorithm~\cite{higgott2025}.

\begin{figure}[t]
    \centering
    \subcaptionbox{Surface Code.\label{sfig:mwpm_code}}
        {\includegraphics{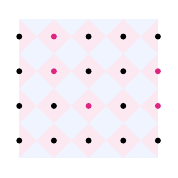}}%
    \hfill
    \subcaptionbox{Matching Graph.\label{sfig:mwpm_mgraph}}
        {\includegraphics{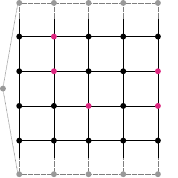}}%
    \hfill
    \subcaptionbox{Syndrome Graph.\label{sfig:mwpm_sgraph}}
        {\includegraphics{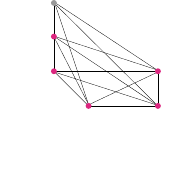}}%
    \hfill
    \subcaptionbox{Matching.\label{sfig:mwpm_match}}
        {\includegraphics{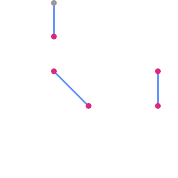}}%
    \hfill
    \subcaptionbox{Correction.\label{sfig:mwpm_corr}}
        {\includegraphics{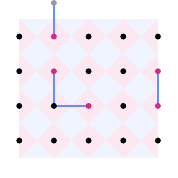}}%
    \caption{The Pauli-$X$ stabilizer measurements of the distance 5 surface
        code~(\subref{sfig:mwpm_code}), induce a matching graph~(\subref{sfig:mwpm_mgraph}) which is
        extended by boundary nodes. From the matching graph, the syndrome graph is constructed as a
        complete graph of the defect nodes (red) and the shortest paths between
        them~(\subref{sfig:mwpm_sgraph}). A minimum-weight perfect matching in the syndrome
        graph~(\subref{sfig:mwpm_match}) corresponds to a most probable correction operator
        consistent with the syndrome~(\subref{sfig:mwpm_corr}).}
    \label{fig:mwpm_algo}
\end{figure}

\edit{\section{Further results for MWPM Decoding and Ensembling under other noise models}%
\label{app:further_plots}

This Appendix contains further plots for Section~\ref{OtherNoise}. 
Fig.~\ref{fig:hook_errors_d4} shows the effect of ensembling on the rotated surface code under split hook error bitflip noise at $d=4$.
The results for $d=3$ are not shown since there are only two error locations at this distance, and no combination of errors on these two locations leads to a logical error. 
Fig.~\ref{fig:readout_noise} shows the effect of ensembling on the rotated surface code under combined bitflip and readout noise, with $d$ rounds of stabilizer measurements. The readout noise is applied to $Z$-type stabilizers only.

\begin{figure}
    \centering
    \includegraphics[width=\linewidth]{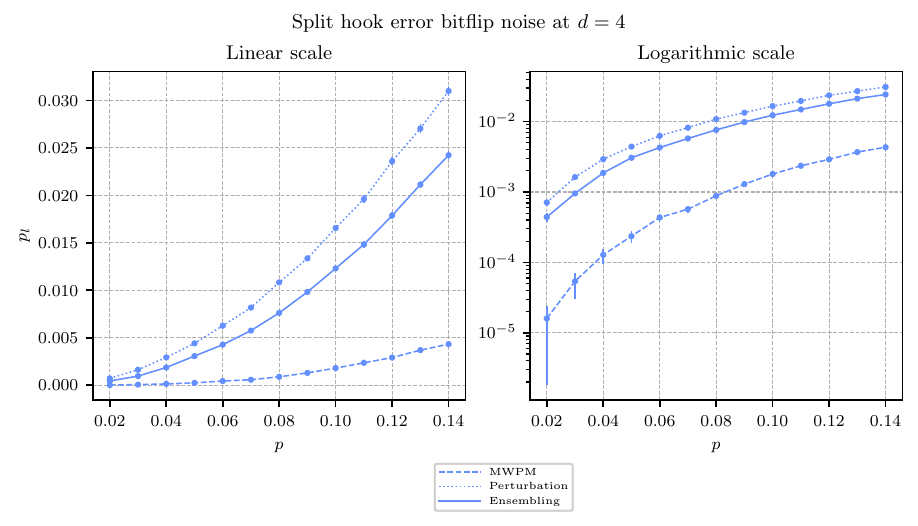}
    \caption{\edit{Left: Logical performance of the surface code under split bitflip hook errors on $X$-type stabilizers for  $d=4$. Right: The same logical performance in log-log-scale, allowing the low-$p$ behavior to be seen. The ensembling uses 50 random perturbations. The results labeled ``Perturbation'' were generated using one single perturbation. MWPM significantly outperforms both ensembled MWPM and perturbed MPWM, indicating that the bias in \texttt{PyMatching} aligns with the split hook errors at $d=4$.}}
    \label{fig:hook_errors_d4}
\end{figure}

\begin{figure}
    \centering
    \includegraphics[width=\linewidth]{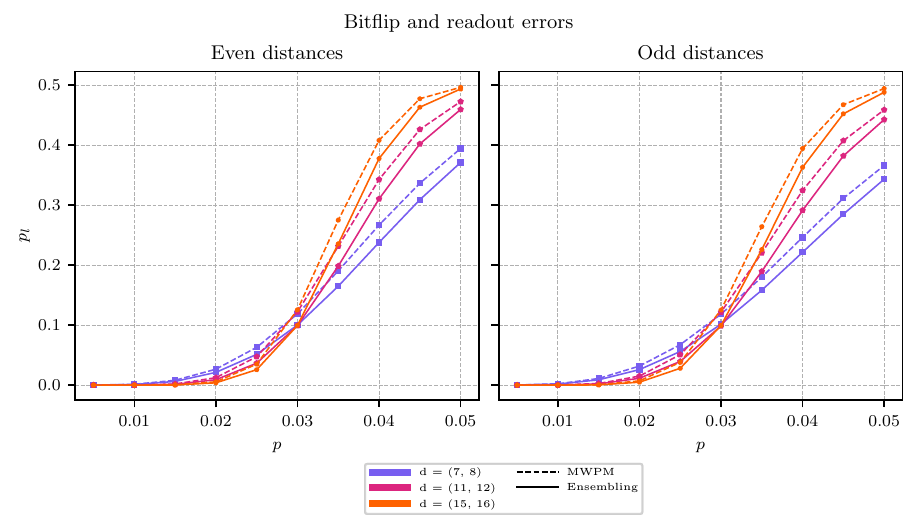}
    \caption{\edit{
    Logical performance of the surface code under combined bitflip and readout noise for even (left) and odd (right) code distances. The ensembling uses 50 random perturbations on the decoding graph. Both even and odd distances show modest performance improvement under ensembling. The observed threshold is compatible with the known value of $p_{th}\sim 2.9\%$ \cite{wang2003}.
    }}
    \label{fig:readout_noise}
\end{figure}

}

\section{Optimization of Ensembling Parameters}%
\label{app:opt_ensembling}

Ensembling by perturbing the matching graph of the MWPM decoder as discussed for a uniform noise
model in Section~\ref{section:modifiedMatchingBasedDecoder} and a non-uniform noise model in
Section~\ref{ssec:nonuniform_mwpm}, requires setting the standard deviation for the edge weight
perturbations $\sigma_\xi$ and $\sigma_\Xi$. We optimized both values using a simple coarse grained
global optimization scheme~\cite{jones1993} within the interval $[0, 0.5]$ for both standard
deviations and furthermore tested multiple values within the neighborhood of the determined
approximate optimum. Due to the considerable computational cost of this optimization, we limited the
search to the toric code with distances $d=8$ and $d=16$, error rates of $p\in\{0.05, 0.07, 0.1,
0.11\}$, standard deviations for the non-uniform noise model of $\sigma_p\in\{0.005, 0.01, 0.02, 0.03,
0.04, 0.05, 0.1, 0.15\}$, and 10,000 sampled error configurations for each optimization step. For
the uniform noise model, we consistently find the best results for $\sigma_\xi\in[10^{-6},10^{-2}]$,
for larger values the observed logical error rate increases monotonously, considerably smaller
values lead to slight increases. The optimization results for the uniform noise model are shown in
Fig.~\ref{fig:mwpm_opt_pert_uniform}.  With a truncated Gaussian noise model, across the tested
error rate standard deviations, an approximate optimum for all tested configurations can be observed
at $\sigma_\Xi\approx 0.025$, which we set for all ensembling simulations with the MWPM decoder and
non-uniform noise-models. The optimization results for the toric code with $d=8$ and $d=16$ are
shown in Fig.~\ref{fig:mwpm_opt_pert_nonuniform_8} and Fig.~\ref{fig:mwpm_opt_pert_nonuniform_16}.

\begin{figure}
    \centering
    \includegraphics{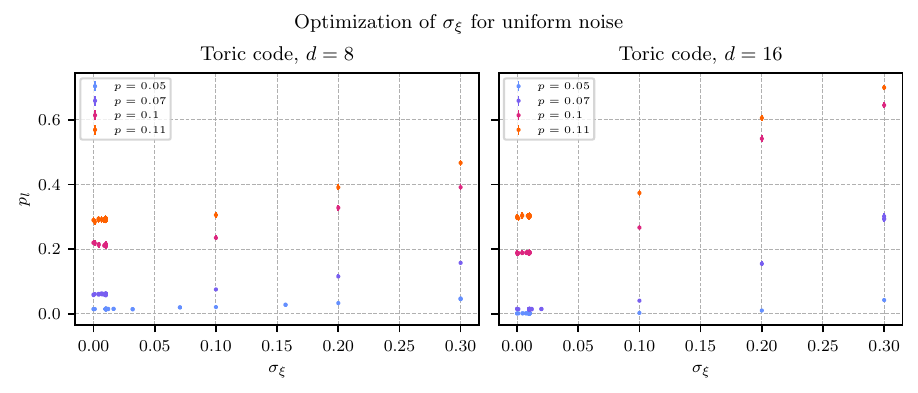}
    \caption{Optimization results for $\sigma_\xi$ under a uniform noise model and the toric code
    with distance $d=8$ (left) and $d=16$ (right). Across four different error rates $p$ we find the
    best results for $\sigma_\xi\in[10^{-6},10^{-2}]$.}
    \label{fig:mwpm_opt_pert_uniform}
\end{figure}

\begin{figure}
    \centering
    \includegraphics{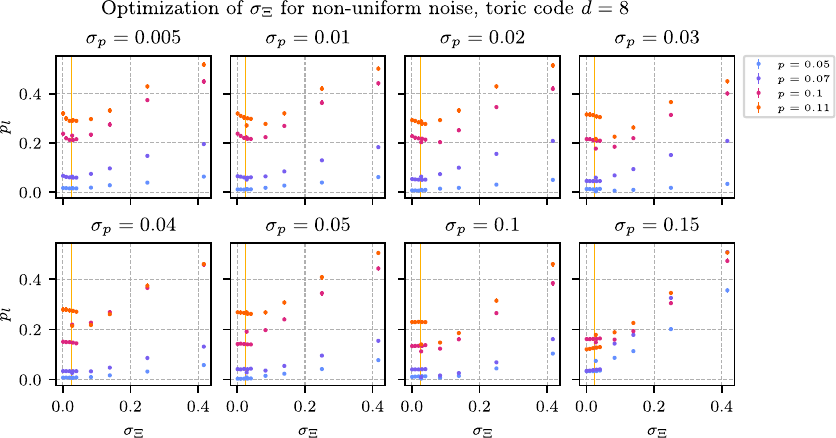}
    \caption{Optimization results for $\sigma_\Xi$ under a non-uniform noise model with various
    noise standard deviations $\sigma_p$ and the toric code with distance $d=8$. The approximate
    optimum of $\sigma_\Xi=0.025$ is marked with the vertical orange line.}
    \label{fig:mwpm_opt_pert_nonuniform_8}
\end{figure}

\begin{figure}
    \centering
    \includegraphics{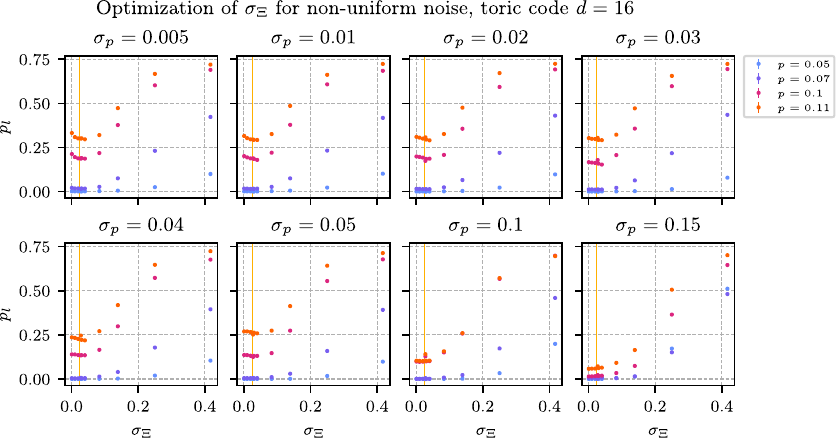}
    \caption{Optimization results for $\sigma_\Xi$ under a non-uniform noise model with various
    noise standard deviations $\sigma_p$ and the toric code with distance $d=16$. The approximate
    optimum of $\sigma_\Xi=0.025$ is marked with the vertical orange line.}
    \label{fig:mwpm_opt_pert_nonuniform_16}
\end{figure}

\clearpage

\bibliographystyle{quantum.bst}
\bibliography{bibliography}
\end{document}